\keywords{bicategory theory, univalent foundations, formalization, monads, Coq}
\newcommand{\UniMath}{\href{https://github.com/UniMath/UniMath}{\nolinkurl{UniMath}}\xspace}
\newcommand{\shorthash}{1054856}
\newcommand{\nolinkcoqident}[1]{\nolinkurl{#1}}
\newcommand{\coqident}{\begingroup\@makeother\#\@coqident}
\newcommand{\@coqident}[3][]{%
	\ifthenelse{\isempty{#2}}%
	{\nolinkcoqident{#3}}%
	{\ifthenelse{\isempty{#1}}%
		{\href{\coqdocurl{#2}{#3}}{\nolinkcoqident{#3}}}%
		{\href{\coqdocurl{#2}{#3}}{\nolinkcoqident{#1}}}}%
	\endgroup}
\newcommand{\coqfile}[2]{%
	\ifthenelse{\isempty{#1}}%
	{\href{\coqdocbaseurl #2.html}{\nolinkcoqident{#2.v}}}%
	{\href{\coqdocbaseurl #1.#2.html}{\nolinkcoqident{#2.v}}}}
\newcommand{\constfont}[1]{\ensuremath{\mathsf{#1}}}
\newcommand{\cat}[1]{\ensuremath{\constfont{#1}}\xspace}
\newcommand{\defeq}{\coloneqq}
\newcommand{\trunc}[1]{\mathopen{}\left\Vert #1\right\Vert \mathclose{}}
\newcommand{\sigmatype}[3]{\sum(#1 : #2), #3}
\newcommand{\typeequiv}[2]{#1 \simeq #2}
\newcommand{\idtoequiv}[0]{\cat{idtoequiv}}
\newcommand{\id}[1]{\operatorname{id}_{#1}}
\newcommand{\idimpl}[1]{\operatorname{id}}
\newcommand{\comp}[2]{#1 \mathbin{\cdot} #2}
\newcommand{\B}{\cat{B}}
\newcommand{\C}{\cat{C}}
\newcommand{\D}{\cat{D}}
\newcommand{\V}{\cat{V}}
\newcommand{\Cat}{\cat{Cat}}
\newcommand{\UnivCat}{\cat{UnivCat}}
\newcommand{\EnrCat}[1]{\cat{UnivCat}_{#1}}
\newcommand{\dMonCat}{\cat{dMonUnivCat}}
\newcommand{\MonCat}{\cat{MonUnivCat}}
\newcommand{\dSymMonCat}{\cat{dSymMonUnivCat}}
\newcommand{\SymMonCat}{\cat{SymMonUnivCat}}
\newcommand{\dTerminal}{\cat{d}\UnivCat_{\cat{Terminal}}}
\newcommand{\Terminal}{\UnivCat_{\cat{Terminal}}}
\newcommand{\opbicat}[1]{#1^{\cat{op}}}
\newcommand{\cobicat}[1]{#1^{\cat{co}}}
\newcommand{\onecell}{\rightarrow}
\newcommand{\twocell}{\Rightarrow}
\newcommand{\adjequiv}[2]{#1 \simeq #2}
\newcommand{\invcell}[2]{#1 \cong #2}
\newcommand{\dispadjequiv}[3]{#1 \simeq_{#3} #2}
\newcommand{\dispinvcell}[3]{#1 \cong_{#3} #2}
\newcommand{\precomp}[2]{(\comp{#1}{-})_{#2}}
\newcommand{\postcomp}[2]{(\comp{-}{#1})_{#2}}
\newcommand{\homC}[3]{\underline{#1(#2,#3)}}
\newcommand{\idtoiso}[2]{\cat{idtoiso}_{#1, #2}}
\newcommand{\idtoisoloc}[2]{\cat{idtoiso}^{2, 1}_{#1, #2}}
\newcommand{\idtoisoglob}[2]{\cat{idtoiso}^{2, 0}_{#1, #2}}
\newcommand{\dispidtoisoloc}[1]{\disp{\cat{idtoiso}}^{2, 1}_{#1}}
\newcommand{\dispidtoisoglob}[1]{\disp{\cat{idtoiso}}^{2, 0}_{#1}}
\newcommand{\vcomp}{\bullet}
\newcommand{\whiskerl}{\vartriangleleft}
\newcommand{\whiskerr}{\vartriangleright}
\newcommand{\lunitorfull}[1]{\lambda_{#1}}
\newcommand{\runitorfull}[1]{\rho_{#1}}
\newcommand{\rassociatorfull}[3]{\alpha_{#1,#2,#3}}
\newcommand{\lunitor}[1]{\lambda}
\newcommand{\linvunitor}[1]{\lambda^{-1}}
\newcommand{\runitor}[1]{\rho}
\newcommand{\rinvunitor}[1]{\rho^{-1}}
\newcommand{\rassociator}[3]{\alpha}
\newcommand{\lassociator}[3]{\alpha^{-1}}
\newcommand{\identitor}[2]{#1_{i}(#2)}
\newcommand{\compositor}[3]{#1_{c}(#2, #3)}
\newcommand{\dob}[2]{\ensuremath{{#1}_{#2}}}
\newcommand{\dmor}[3]{#1 \xrightarrow{#3} #2}
\newcommand{\dtwo}[3]{#1 \xRightarrow{#3} #2}
\newcommand{\disp}[1]{\overline{#1}}
\newcommand{\total}[1]{\int #1}
\newcommand{\proj}[1]{\pi_{#1}}
\newcommand{\sob}[2]{{#1}_{#2}}
\newcommand{\smor}[2]{{#1}_{#2}}
\newcommand{\stwo}[2]{{#1}_{#2}}
\newcommand{\sid}[2]{{#1}_i(#2)}
\newcommand{\scomp}[3]{{#1}_c(#2, #3)}
\newcommand{\Section}[1]{\cat{Section}(#1)}
\newcommand{\sigmaD}[1]{\sum (#1)}
\newcommand{\dfullsub}[1]{\cat{dFullSub}(#1)}
\newcommand{\dEndo}[1]{\cat{dEndo}(#1)}
\newcommand{\Endo}[1]{\cat{Endo}(#1)}
\newcommand{\dUnit}[1]{\cat{dUnit}(#1)}
\newcommand{\dMult}[1]{\cat{dMult}(#1)}
\newcommand{\dMndData}[1]{\cat{dMndData}(#1)}
\newcommand{\MndData}[1]{\cat{MndData}(#1)}
\newcommand{\isMnd}{\cat{isMnd}}
\newcommand{\dmnd}[1]{\cat{dMnd}(#1)}
\newcommand{\mnd}[1]{\cat{Mnd}(#1)}
\newcommand{\mndn}[2]{\cat{Mnd}^n(#1)}
\newcommand{\monadob}[1]{\cat{ob}_{#1}}
\newcommand{\monadendo}[1]{#1}
\newcommand{\monadendofull}[1]{\cat{mor}_{#1}}
\newcommand{\monadunit}[1]{\cat{\eta}_{#1}}
\newcommand{\monadmult}[1]{\cat{\mu}_{#1}}
\newcommand{\monadmorob}[1]{#1}
\newcommand{\monadmorobfull}[1]{\cat{mor}_{#1}}
\newcommand{\monadmorendo}[1]{\cat{cell}_{#1}}
\newcommand{\monadcellob}[1]{#1}
\newcommand{\monadcellobfull}[1]{\cat{cell}_{#1}}
\newcommand{\hommonad}[2]{\cat{HomMnd}_{#1}(#2)}
\newcommand{\idmonad}[1]{\cat{idMnd}(#1)}
\newcommand{\idmonadpsfunctor}[1]{\cat{idMnd}}
\newcommand{\compM}[2]{#1 \mathbin{\cdot}_{\cat{Mnd}} #2}
\newcommand{\emobfull}[1]{\cat{ob}_{#1}}
\newcommand{\emob}[1]{#1}
\newcommand{\emmor}[1]{\cat{mor}_{#1}}
\newcommand{\emcell}[1]{\cat{cell}_{#1}}
\newcommand{\emfunctor}[2]{\cat{EMFunctor}_{#1, #2}}
\newcommand{\emfunctoralt}[2]{\cat{EMFunctor}'_{#1, #2}}
\newcommand{\emlim}[2]{\cat{EM}_{#1}(#2)}
\newcommand{\emumpmor}[1]{\cat{EM}_{\cat{mor}}(#1)}
\newcommand{\emumpcom}[1]{\cat{EM}_{\cat{com}}(#1)}
\newcommand{\emumpcell}[1]{\cat{EM}_{\cat{cell}}(#1)}
\newcommand{\klob}[1]{\cat{ob}_{#1}}
\newcommand{\klmor}[1]{\cat{mor}_{#1}}
\newcommand{\klcell}[1]{\cat{cell}_{#1}}
\newcommand{\klumpmor}[1]{\cat{Kl}_{\cat{mor}}(#1)}
\newcommand{\klumpcom}[1]{\cat{Kl}_{\cat{com}}(#1)}
\newcommand{\klumpcell}[1]{\cat{Kl}_{\cat{cell}}(#1)}
\newcommand{\kleisli}[1]{\mathcal{K}(#1)}
\newcommand{\univkleisli}[1]{\cat{Kleisli}(#1)}
\newcommand{\eilenbergmoore}[1]{\cat{EM}(#1)}
\newcommand{\freealg}[1]{\cat{FAlg}_{#1}}
\newcommand{\kleislifunctor}[1]{\cat{incl}_{#1}}
\newcommand{\adjunction}[4]{#1 \nsststile{#4}{#3} #2}
\newcommand{\dispadjunction}[4]{\disp{#1} \nsststile{\disp{#4}}{\disp{#3}} \disp{#2}}
\newcommand{\leftadj}[2]{\cat{LeftAdj}_{#1}(#2)}
\newcommand{\rightadj}[2]{\cat{RightAdj}_{#1}(#2)}
\newcommand{\adjtomnd}[1]{\cat{AdjToMnd}(#1)}
\newcommand{\mndtoadj}[1]{\cat{MndToAdj}(#1)}
\newcommand{\mndtoadjequiv}[1]{\cat{MndEquiv}(#1)}
\newcommand{\homadj}[2]{\cat{HomAdj}_{#1}(#2)}
\newcommand{\comparison}[1]{\cat{Comparison}(#1)}
\theoremstyle{plain} %\crefname{satz}{Satz}{S\"atze}
\newtheorem{problem}[thm]{Problem}
\theoremstyle{definition}
\newtheorem{constrInternal}[thm]{Construction}
\newenvironment{construction}[2][]
{\pushQED{\qed}\begin{constrInternal}[{for Problem~\ref{#2}; #1}]}
	{\popQED\end{constrInternal}}
\crefname{problem}{Problem}{Problems}
\crefname{construction}{Construction}{Constructions}
\crefname{section}{Section}{Sections}
\crefname{thm}{Theorem}{Theorems}
\crefname{prop}{Proposition}{Propositions}
\crefname{lem}{Lemma}{Lemmas}
\crefname{defi}{Definition}{Definitions}
\crefname{exa}{Example}{Examples}
\begin{document}

\title{The Formal Theory of Monads, Univalently}

\author{Niels van der Weide\lmcsorcid{0000-0003-1146-4161}}

\address{Institute for Computing and Information Sciences, Radboud University, Nijmegen, The Netherlands}
\email{nweide@cs.ru.nl}

\begin{abstract}
  \noindent We develop the formal theory of monads, as established by Street, in univalent foundations.
This allows us to formally reason about various kinds of monads on the right level of abstraction.
In particular, we define the bicategory of monads internal to a bicategory, and prove that it is univalent.
We also define Eilenberg-Moore objects, and we show that both Eilenberg-Moore categories and Kleisli categories give rise to Eilenberg-Moore objects.
Finally, we relate monads and adjunctions in arbitrary bicategories.
Our work is formalized in Coq using the \UniMath library.
\end{abstract}

\maketitle

\section{Introduction}
\label{sec:intro}
Monads are ubiquitous in both mathematics and computer science, and many different kinds of monads have been considered in various settings.
In functional programming, monads are used to capture computational effects \cite{JonesW93}.
Strong monads have been used to provide semantics of programming languages such as Moggi's computational $\lambda$-calculus \cite{Moggi89,moggi1991notions} and models of call-by-push-value \cite{levy2012call}.
Monads are also used in algebra to represent algebraic theories, and in fact, the class of algebraic theories is equivalent to a class of monads \cite{hyland2007category}.
This result has been adapted to the enriched case as well in order to relate various notions of computation with enriched monads \cite{EggerMS14,plotkin2002notions,plotkin2003algebraic,power1999enriched}. 
Comonads, the dual notion of monads, found applications in the semantics of linear logic  \cite{Benton94,mellies2009categorical}.

A general setting in which all these different variations of monads can be studied, has been developed by Street \cite{street1972formal}.
This setting, known as \emph{the formal theory of monads}, uses the fact that the notion of monad can be defined internal to an arbitrary bicategory \cite{10.1007/BFb0074299}, including 2-categories (which were used by Street).
Each of the aforementioned kinds of monads is actually an instance of this more general notion.
For example, monads in the bicategory of symmetric monoidal categories are symmetric monoidal monads, and strong monads are monads in the bicategory of so-called left actegories \cite{capucci2022actegories}.
Comonads in a bicategory $\B$ are the same as monads in $\cobicat{\B}$, which is $\B$ with the 2-cells reversed.
Even several kinds of distributive laws, including mixed distributive laws  \cite{bohm20112,power2002combining} and iterated distributive laws \cite{cheng2011iterated}, are instances of this notion of monad.
An overview of the different kinds of monads internal to various bicategories can be found in Table \ref{tab:notion-of-monad}.
As such, the formal theory of monads provides a general setting to study various kinds of monads.

\newpage

\begin{table}
\begin{center}
\begin{tabular}{c | c}
Bicategory                                 & Notion of monad \\
\hline
Symmetric monoidal categories & Symmetric monoidal monad \\
Actegories \cite{capucci2022actegories}                               & Strong monad \\
Enriched categories \cite{kelly1982basic}                 & Enriched monad \\
Bicategory of monads              & Distributive law
\end{tabular}
\end{center}
\caption{Various notions of monads}
\label{tab:notion-of-monad}
\end{table}

\subsection*{Foundations}
In this paper, we work in \textbf{univalent foundations} \cite{hottbook}.
Univalent foundations is an extension of intensional type theory with the univalence axiom.
Roughly speaking, this axiom says that equivalent types are equal.
More precisely, we have a map $\idtoequiv$ that sends identities $X = Y$ to equivalences from $X$ to $Y$, and the univalence axiom states that $\idtoequiv$ is an equivalence itself.
This axiom has numerous effects on the mathematics developed in this foundation.
One consequence is function extensionality and another is that identity types must necessarily be proof relevant: since there could be multiple equivalence between two types, we could have different proofs of their equality.

In addition, the notion of category studied in this setting is that of \textbf{univalent categories} \cite{rezk_completion}.
In every category $\C$, we have a map $\idtoiso{x}{y} : x = y \rightarrow x \cong y$, and $\C$ is \textbf{univalent} if $\idtoiso{x}{y}$ is an equivalence for all $x, y : \C$.
From the univalence axiom, one can deduce that the category of sets is univalent.
The reason why this notion is interesting, is because in the set-theoretical semantics \cite{simpset}, univalent categories correspond to ordinary categories.
Furthermore, every property expressible in type theory about univalent categories is closed under equivalence.

However, there are some challenges when working with univalent categories.
For instance, given a monad $m$ on a category $\C$, we usually define the Kleisli category $\kleisli{m}$ of $m$ to be the category whose objects are objects of $\C$ and whose morphisms from $x$ to $y$ are morphisms $x \rightarrow m(x)$ \cite{mac2013categories}.
Even if we assume $\C$ to be univalent, the category $\kleisli{m}$ does not have to be univalent as well,
and this means that $\kleisli{m}$ does not give the Kleisli category for a monad on a univalent category.
A solution to this problem has already been given: instead of using $\kleisli{m}$, we need to use its Rezk completion \cite{univalence-principle}.
However, the necessary theorems about the Kleisli category (e.g., every monad gives rise to an adjunction via the Kleisli category) have not been proven in that work.

In the present paper, we study and formalize the formal theory of monads by Street \cite{street1972formal} in univalent foundations.
More specifically, we formalize the key notions, which are the bicategory of monads and Eilenberg-Moore objects, and we illustrate them with numerous examples.
We also prove the two main theorems that relate monads to adjunctions: every adjunction gives rise to a monad and in a bicategory with Eilenberg-Moore objects, every monad gives rise to an adjunction.
In addition, we instantiate the formal theory of monads to deduce the main theorems about Kleisli categories.
The contribution of this paper is the development of the formal theory of monads in univalent foundations and a proof that in univalent foundations, every monad gives rise to an adjunction via the Kleisli category.

The abstract setting provided by the formal theory of monads is beneficial for formalization, which is our main motivation for this work.
The main theorems are only proven once in this setting, and afterwards they are instantiated to the relevant cases of interest without the need of reproving anything.
In addition, we think that this work would be useful to formalize the categorical semantics of linear logic \cite{mellies2009categorical} or the enriched effect calculus \cite{EggerMS14}.

Univalence also aids our development and it allows us to make several proof simpler and more elegant.
More specifically, univalence allows us to do induction on equivalences: to prove some statement for all adjoint equivalences,
it suffices to prove it only for the identity.
This idea is used in \cref{prop:mnd-adjequiv},
where we characterize adjoint equivalences in the bicategory of monads.
In addition, certain types become mere propositions in locally univalent bicategories,
such as the type that says that an Eilenberg-Moore cone or a Kleisli object is universal (\cref{prop:isaprop_is_kleisli_object,prop:em-universal-prop}) or that a 1-cell is monadic (\cref{prop:isaprop_monadic}).
We use these observations in \cref{prop:em-ump,prop:repr-monadic} to prove the equivalence of different formulations.

\subsection*{Formalization}
The results in this paper are formalized using the Coq proof assistant \cite{Coq:manual}, and they are integrated in the \UniMath library \cite{UniMath}.
\UniMath is under constant development, and the paper refers to the version with \texttt{git} hash \shorthash.
The formalization consists of around 12,000 lines of code.
More specifically, the tool \texttt{coqwc} gives the following count:

\begin{verbatim}
	spec    proof comments
	4470     7839      182 total
\end{verbatim}

In our development, displayed bicategories play a fundamental role.
We use them in \cref{sec:bicat-mnd} to give a modular proof that the bicategory of monads is univalent,
and in \cref{constr:sec-mnd} to construct the pseudofunctor that sends every object $x$ in a bicategory to the identity monad on $x$.

The main difficulty in this development arises from the coherences that have to be proven.
Since we use bicategories rather than 2-categories, it is necessary to decorate our terms with associators and unitors.
For instance, each of the monad laws in \cref{def:mnd} contains one associator or one unitor.
A more extreme example is given by the composition of monads if we have a distributive law between them (\cref{exa:mnd-com})
for which the multiplication uses four associators.
If we would have used 2-categories instead, there was no need to write down these associators and unitors.
As a consequence, whenever we prove a coherence, our terms are polluted with numerous associators and unitors.
While we essentially prove each coherence in the same way as one would for strict 2-categories,
the main difference is the increased amount of bureaucracy necessary to deal with the extra associators and unitors that are present in terms.

Each numbered environment in this paper comes with a link pointing to the relevant identifier in the formalization.
For example, \coqident{Bicategories.Core.Bicat}{bicat} refers to the definition of a bicategory.
In the same repository\footnote{\url{https://github.com/nmvdw/FormalTheoryOfMonadsUnivalently/tree/main/alectryon}}, there is a version of the documentation compiled with Alectryon \cite{Alectryon+SLE2020}.

\subsection*{Related work}
The formal theory of monads was originally developed by Street \cite{street1972formal}, and later extended by Lack and Street \cite{lack2002formal}.
There are two differences between our work and Street's work.
First of all, Street used strict 2-categories while we use bicategories.
As has already been noticed by Lack \cite{lack2004composing}, this difference is rather minor.
The resulting definitions are similar: the only difference is that associators and unitors have to be put on the right places.
More fundamental is the second difference: we work in univalent foundations and we use univalent (bi)categories whereas Street works in set-theoretic foundations.
This affects the development in several ways.
While both bicategories and strict 2-categories have been defined and studied in a univalent setting \cite{bicatspaper}, a coherence theorem \cite{maclane1985coherence,power1989coherence} has not been proven in this setting.
In addition, since we work in an intensional setting, working with a strict 2-category is not significantly more convenient than working with a bicategory.
The reason for that is that equality proofs of associativity and unitality are present in terms to guarantee that the whole expression is well-typed.
As such, a coherence theorem would only have limited usability in our setting compared to a classical one.
Another difference is that in our framework, the usual definition of the Kleisli category does not give rise to a univalent category, and we need to work with its Rezk completion instead.
An overview of the main notions in bicategory theory can be found in various sources \cite{10.1007/BFb0074299,johnson20212,lack20102,leinster:basic-bicats}.

Several formalizations have results about bicategory theory.
The coherence theorem \cite{leinster:basic-bicats} is formalized in both Isabelle \cite{nipkow2002isabelle,Stark20} and Lean \cite{X20}, but neither of those are based on univalent foundations.
Some notions in bicategory theory have been formalized in Agda \cite{norell2009dependently}, namely in the 1Lab \cite{1lab} and the Agda-categories library \cite{HuC21}.
However, neither of these cover the formal theory of monads.
We use \UniMath \cite{UniMath} and its formalization of bicategories \cite{bicatspaper,VvdW}.
Formalizations on category theory are more plentiful, and an overview can be found in \cite{HuC21}.
Within the framework of univalent foundations, there is the HoTT library \cite{BauerGLSSS17,GrossCS14}, Agda-UniMath \cite{Agda-UniMath}, and Cubical Agda \cite{VezzosiM019}.
Ahrens, Matthes, and M\"ortberg formalized monads of categories in \UniMath \cite{UniMath}.
They also defined a notion of signature, that allows for binding, and they showed that every signature gives rise to a monad \cite{AhrensMM19,AhrensMM22}.

\subsection*{Version history}
An earlier version of this paper was published at FSCD 2023 \cite{DBLP:conf/fscd/Weide23}.
Compared to that version, the following changes have been made.
\begin{itemize}
  \item In \cref{sec:bicat-mnd}, we added a proof sketch of the univalence of the bicategory of monads (\cref{prop:mnd-univ})
    and we added a proof of \cref{prop:mnd-adjequiv}.
    We also added a figure (\cref{fig:construction}) illustrating the construction in this section.
  \item We added more examples of monads, namely mixed distributive laws (\cref{exa:mixeddistrlaw}),
    and we added a complete description of the action of pseudofunctors on monads (\cref{constr:mnd-psfunctor}).
  \item We gave more explicit descriptions of the proofs in \cref{sec:em-ob}.
  \item We added another example of Eilenberg-Moore objects (\cref{exa:em-enriched}).
  \item We added the statement that being monadic is a proposition (\cref{prop:isaprop_monadic}).
\end{itemize}

\subsection*{Overview}
We start this paper by recalling some preliminary notions in \cref{sec:prelims}.
Next we construct in \cref{sec:bicat-mnd} the bicategory of monads internal to bicategories and we prove that it is univalent.
We illustrate the material of \cref{sec:bicat-mnd} with various examples in \cref{sec:examples}.
In \cref{sec:em-ob} we discuss Eilenberg-Moore objects.
We follow that up in \cref{sec:duality} by using Kleisli categories to construct Eilenberg-Moore objects in the opposite bicategory.
In \cref{sec:adj,sec:monadic} we prove some theorems in this setting.
We prove in \cref{sec:adj} that every adjunction gives rise to a monad and that every monad gives rise to an adjunction under mild assumptions.
In \cref{sec:monadic} we define the notion of monadic adjunctions in an arbitrary bicategory and we characterize those using monadic adjunctions in categories.
We conclude in \cref{sec:concl}.

\section{Preliminaries}
\label{sec:prelims}
In this section, we briefly recall some of the basic notions needed in this paper.
First of all, we use the notions of \emph{propositions} and \emph{sets} from univalent foundations.
Types $A$ for which we have $x = y$ for all $x, y : A$, are called \textbf{propositions}, and types $A$ for which every $x = y$ is a proposition, are called \textbf{sets}.
In addition, we assume that our foundation supports the \textbf{propositional truncation}: the truncation $\trunc{A}$ is $A$ with all its elements identified.
More concretely, we have a map $A \rightarrow \trunc{A}$ and for all $x, y : \trunc{A}$, we have $x = y$.
Next we discuss some notions from bicategory theory \cite{bicatspaper,10.1007/BFb0074299,leinster:basic-bicats}, and we start with \emph{bicategories}.

\begin{defi}[\coqident{Bicategories.Core.Bicat}{bicat}]
\label{def:bicat}
A \textbf{bicategory} $\B$ consists of a type $\B$ of objects, for every $x, y : \B$ a type $x \onecell y$ of 1-cells, and for every $f, g : x \onecell y$, a \emph{set} $f \twocell g$.
On this data, we have the following operations.
\begin{itemize}
	\item For every $x : \B$, a 1-cell $\id{x} : x \onecell x$;
	\item for all 1-cells $f : x \onecell y$ and $g : y \onecell z$, a 1-cell $\comp{f}{g} : x \onecell z$;
	\item for every 1-cell $f : x \onecell y$, a 2-cell $\id{f} : f \twocell f$;
	\item for all 2-cells $\theta : f \twocell g$ and $\tau : g \twocell h$, a 2-cell $\theta \vcomp \tau : f \twocell h$;
	\item for every 1-cell $f : x \onecell y$, invertible 2-cells $\lunitorfull{f} : \comp{\id{x}}{f} \twocell f$ and $\runitorfull{f} : \comp{f}{\id{y}} \twocell f$;
	\item for all 1-cells $f : w \onecell x$, $g : x \onecell y$, and $h : y \onecell z$, an invertible 2-cell
          $\rassociatorfull{f}{g}{h} : \comp{f}{(\comp{g}{h})} \twocell \comp{(\comp{f}{g})}{h}$.
\end{itemize}
If the relevant 1-cells are clear from the context, we write $\lunitor{f}$, $\runitor{f}$, and $\rassociator{f}{g}{h}$ instead of $\lunitorfull{f}$, $\runitorfull{f}$, and $\rassociatorfull{f}{g}{h}$ respectively.
We can also whisker 2-cells with 1-cells in two ways.
Given 1-cells and 2-cells as depicted in the diagram on the left below, we have a 2-cell $f \whiskerl \tau : \comp{f}{g_1} \twocell \comp{f}{g_2}$, and from 1-cells and 2-cells as depicted in the diagram on the right below, we get a 2-cell $\tau \whiskerr g : \comp{f_1}{g} \twocell \comp{f_2}{g}$.
\[
\begin{tikzcd}
	x & y & z
	\arrow["f"', from=1-1, to=1-2]
	\arrow[""{name=0, anchor=center, inner sep=0}, "{g_1}", shift left=3, from=1-2, to=1-3]
	\arrow[""{name=1, anchor=center, inner sep=0}, "{g_2}"', shift right=3, from=1-2, to=1-3]
	\arrow["\tau", shorten <=2pt, shorten >=2pt, Rightarrow, from=0, to=1]
\end{tikzcd}
\quad \quad
\begin{tikzcd}
	x & y & z
	\arrow["g", from=1-2, to=1-3]
	\arrow[""{name=0, anchor=center, inner sep=0}, "{f_1}", shift left=3, from=1-1, to=1-2]
	\arrow[""{name=1, anchor=center, inner sep=0}, "{f_2}"', shift right=3, from=1-1, to=1-2]
	\arrow["\tau", shorten <=2pt, shorten >=2pt, Rightarrow, from=0, to=1]
\end{tikzcd}
\]
The laws that need to be satisfied, can be found in the literature \cite[Definition 2.1]{bicatspaper}.
\end{defi}

Note that we use diagrammatic order for composition instead of compositional order.
We use the notation $\homC{\B}{x}{y}$ for the category whose objects are 1-cells $f : x \onecell y$ and whose morphisms from $f : x \onecell y$ to $g : x \onecell y$ are 2-cells $\tau : f \twocell g$.
Given a 1-cell $f : x \rightarrow y$ and an object $w : \B$, we have a functor $\postcomp{f}{w} : \homC{\B}{w}{x} \onecell \homC{\B}{w}{y}$, which sends a 1-cell $g : w \onecell x$ to $\comp{g}{f}$ and a 2-cell $\tau : g_1 \twocell g_2$ to $\tau \whiskerr f$.

The core example of a bicategory in this paper is $\UnivCat$.
Its objects are \emph{univalent} categories, the 1-cells are functors, and the 2-cells are natural transformations.
We also have a bicategory $\Cat$ whose objects are (not necessarily univalent) categories, 1-cells are functors, and 2-cells are natural transformations.

In this paper, we also make use of \emph{univalent} bicategories.
To define this property, we use that between every two objects $x, y : \B$, we have a type $\adjequiv{x}{y}$ of \emph{adjoint equivalences} between them.
In addition, for all 1-cells $f, g : x \onecell y$, there is a type $\invcell{f}{g}$ of \emph{invertible 2-cells} between them.
For the precise definition of these notions, we refer the reader to the literature \cite[Definitions 2.4 and 2.5]{bicatspaper}.

\pagebreak
\begin{defi}
\label{def:univalence}
Let $\B$ be a bicategory.
\begin{itemize}
	\item (\coqident{Bicategories.Core.Univalence}{is_univalent_2_1}) Using path induction, we define a function $\idtoisoloc{f}{g} : f = g \rightarrow \invcell{f}{g}$ for all 1-cells $f, g : x \onecell y$. The bicategory $\B$ is \textbf{locally univalent} if $\idtoisoloc{f}{g}$ is an equivalence for all $f$ and $g$.
	\item (\coqident{Bicategories.Core.Univalence}{is_univalent_2_0}) Using path induction, we define a function $\idtoisoglob{x}{y} : x = y \rightarrow \adjequiv{x}{y}$ for all objects $x$ and $y$. We say that $\B$ is \textbf{globally univalent} if $\idtoisoglob{x}{y}$ is an equivalence for all $x$ and $y$.
	\item (\coqident{Bicategories.Core.Univalence}{is_univalent_2}) The bicategory $\B$ is \textbf{univalent} if it is both locally and globally univalent.
\end{itemize}
\end{defi}

The bicategory $\UnivCat$ of univalent categories is both locally and globally univalent.
However, $\Cat$, whose objects are not required to be univalent, is neither.

In a univalent bicategory, we can do induction on adjoint equivalences and invertible 2-cells.
This is similar to the J-rule for propositional equality,
which says that to prove some property for every $p : x = y$,
it suffices to show this property for all reflexivity paths.
For adjoint equivalences and invertible 2-cells, we state induction as follows.

\begin{prop}
\label{prop:equiv-ind}
Let $\B$ be a bicategory.
\begin{itemize}
  \item (\coqident{Bicategories.Core.Univalence}{J_2_1})
    Suppose, that $\B$ is locally univalent.
    Let $P$ be a type family on the invertible 2-cells of $\B$.
    More specifically, $P$ is a type depending on objects $x, y : \B$, 1-cells $f, g : x \onecell y$, and invertible 2-cells $\invcell{f}{g}$.
    Furthermore, we assume that we have for every $f : x \onecell y$ an inhabitant of $P(\id{f})$.
    Then we have an inhabitant of $P(\tau)$ for every invertible 2-cell $\tau : \invcell{f}{g}$.
  \item (\coqident{Bicategories.Core.Univalence}{J_2_0})
    Suppose, that $\B$ is globally univalent.
    Let $P$ be a type family on the adjoint equivalences of $\B$.
    More specifically, $P$ is a type depending on objects $x, y : \B$ and adjoint equivalences 1-cells $f : \adjequiv{x}{y}$.
    Furthermore, we assume that we have for every object $x$ an inhabitant of $P(\id{x})$.
    Then we have an inhabitant of $P(f)$ for every adjoint equivalence $f : \adjequiv{x}{y}$.
\end{itemize}
\end{prop}

If we have a bicategory $\B$, then we define the bicategory $\opbicat{\B}$ by `reversing the 1-cells' in $\B$.
More precisely, objects are the same as objects in $\B$, 1-cells from $x$ to $y$ in $\opbicat{\B}$ are 1-cells $y \onecell x$ in $\B$, while 2-cells from $f : y \onecell x$ to $g : y \onecell x$ are 2-cells $f \twocell g$ in $\B$.
In addition, from a bicategory $\B$, we obtain $\cobicat{\B}$ by `reversing the 2-cells'.
Objects and 1-cells in $\cobicat{\B}$ are the same as objects and 1-cells in $\B$ respectively, but a 2-cell from $f$ to $g$ in $\cobicat{\B}$ is a 2-cell $g \twocell f$ in $\B$.
Next we define \emph{pseudofunctors}.

\begin{defi}[\coqident{Bicategories.PseudoFunctors.PseudoFunctor}{psfunctor}]
\label{def:psfunctor}
Let $\B_1$ and $\B_2$ be bicategories.
A \textbf{pseudofunctor} $F : \B_1 \onecell \B_2$ consists of
\begin{itemize}
	\item A function $F : \B_1 \rightarrow \B_2$;
	\item For every $x, y : \B_1$ a function that maps $f : x \onecell y$ to $F(f) : F \> x \onecell F \> y$;
	\item For all 1-cell $f, g : x \onecell y$ a function that maps $\theta : f \twocell g$ to $F(\theta) : F(f) \twocell F(g)$;
	\item For every $x : \B_1$, an invertible 2-cell $\identitor{F}{x} : \id{F(x)} \twocell F(\id{x})$;
	\item For all $f : x \onecell y$ and $g : y \onecell z$, an invertible 2-cell $\compositor{F}{f}{g} : \comp{F(f)}{F(g)} \twocell F(\comp{f}{g})$.
\end{itemize}
The coherences that need to be satisfied, can be found in the literature \cite[Definition 2.12]{bicatspaper}.
\end{defi}

In applications, we are interested in a wide variety of bicategories beside $\UnivCat$, and among those are the bicategory $\Terminal$ of categories with a terminal object
and the bicategory $\SymMonCat$ of symmetric monoidal categories.
These examples have something in common: their objects are categories equipped with some extra structure, the 1-cells are structure preserving functors, and the 2-cells are structure preserving natural transformations.
We capture this pattern using \emph{displayed bicategories} \cite[Definition 6.1]{bicatspaper}.
This notion is an adaptation of \emph{displayed categories} \cite{AhrensL19} to the bicategorical setting.

To get an idea of what displayed bicategories are, let us first briefly discuss displayed categories.
A displayed category $\D$ over $\C$ represents structure and properties to be added to the objects and morphisms of $\C$.
For every object $x : \C$, we have a type of \emph{displayed objects} $\D_x$ and for every morphism $f : x \onecell y$ and displayed objects $\disp{x} : \D_x$ and $\disp{y} : \D_y$, we have a set $\dmor{\disp{x}}{\disp{y}}{f}$ of \emph{displayed morphisms}.
For example, if for $\C$ we take the category of sets, an example of a displayed category would be group structures.
The displayed objects over a set $X$ are group structures over $X$, while the displayed morphisms over $f : X \rightarrow Y$ between two group structures are proofs that $f$ preserves the group operations.
Every displayed category $\D$ gives rise to the \emph{total category} $\total{\D}$ and a functor $\proj{\D} : \total{\D} \onecell \C$.
In the example we mentioned before, $\total{\D}$ would be the category of groups and $\proj{\D}$ maps a group to its underlying set.

In the bicategorical setting, we use a similar approach, but a displayed bicategory should not only have displayed objects and 1-cells, but also displayed 2-cells.
More precisely, we define displayed bicategories as follows.

\begin{defi}[\coqident{Bicategories.DisplayedBicats.DispBicat}{disp_bicat}]
\label{def:dispbicat}
A \textbf{displayed bicategory} $\D$ over a bicategory $\B$ consists of the following data.
\begin{itemize}
	\item A type $\dob{\D}{x}$ of \textbf{displayed objects} for every $x : \B$;
	\item a type $\dmor{\disp{x}}{\disp{y}}{f}$ of \textbf{displayed morphisms} for all 1-cells $f : x \onecell y$ and displayed objects $\disp{x} : \dob{\D}{x}$ and $\disp{y} : \dob{\D}{y}$;
	\item a set $\dtwo{\disp{f}}{\disp{g}}{\theta}$ of \textbf{displayed 2-cells} for all 2-cells $\theta : f \twocell g$ and displayed morphisms $\disp{f} : \dmor{\disp{x}}{\disp{y}}{f}$ and $\disp{g} : \dmor{\disp{x}}{\disp{y}}{g}$.
\end{itemize}
In addition, there are displayed versions of every operation of bicategories.
For example, for every $x : \B$ and $\disp{x} : \dob{\D}{x}$, we have the displayed identity $\disp{\id{x}} : \dmor{\disp{x}}{\disp{x}}{\id{x}}$, and from two displayed morphisms $\disp{f} : \dmor{\disp{x}}{\disp{y}}{f}$ and $\disp{g} : \dmor{\disp{y}}{\disp{z}}{g}$ we get a displayed morphism $\comp{\disp{f}}{\disp{g}} : \dmor{\disp{x}}{\disp{z}}{\comp{f}{g}}$.
A list of the operations and laws can be found in the literature \cite[Definition 6.1]{bicatspaper}.
\end{defi}

Just like for displayed categories, every displayed bicategory $\D$ gives rise to  bicategory $\total{\D}$, called the \emph{total bicategory}, and a pseudofunctor $\proj{\D} : \total{\D} \onecell \B$.

\begin{problem}
\label{prob:total-bicat}
Given a displayed bicategory $\D$ over $\B$, to construct a bicategory $\total{\D}$ and a pseudofunctor $\proj{\D} : \total{\D} \onecell \B$.
\end{problem}

\begin{construction}[\coqident{Bicategories.DisplayedBicats.DispBicat}{total_bicat}]{prob:total-bicat}
\label{constr:total-bicat}
The bicategory $\total{\D}$ is defined as follows.
\begin{itemize}
	\item Its objects are pairs $x : \B$ together with $\disp{x} : \dob{\D}{x}$;
	\item its 1-cells from $(x, \disp{x})$ to $(y, \disp{y})$ are pairs $f : x \onecell y$ together with $\disp{f} : \dmor{\disp{x}}{\disp{y}}{f}$;
	\item its 2-cells from $(f, \disp{f})$ to $(g, \disp{g})$ are pairs $\tau : f \twocell g$ together with $\disp{\tau} : \dtwo{\disp{f}}{\disp{g}}{\tau}$.
\end{itemize}
The pseudofunctor $\proj{\D}$ sends objects $(x, \disp{x})$ to $x$, 1-cells $(f, \disp{f})$ to $f$, and 2-cells $(\theta, \disp{\theta})$ to $\theta$.
\end{construction}

Displayed bicategories are useful to construct univalent bicategories in a modular way.
For this, we need \emph{univalent displayed bicategories} \cite[Definition 7.3]{bicatspaper}.
Displayed univalence is defined in a similar way as univalence for bicategories.
Concretely, we use the notion of \emph{displayed invertible 2-cells} \cite[Definition 7.1]{bicatspaper}
and of \emph{displayed invertible 2-cells} \cite[Definition 7.2]{bicatspaper}.
The type of displayed invertible 2-cells over an invertible 2-cell $\tau : \invcell{f}{g}$ is denoted by $\dispinvcell{\disp{f}}{\disp{g}}{\tau}$
for displayed 1-cells $\disp{f} : \dmor{\disp{x}}{\disp{y}}{f}$ and $\disp{g} : \dmor{\disp{x}}{\disp{y}}{g}$.
The type of displayed adjoint equivalences over an adjoint equivalence $f : \adjequiv{x}{y}$ is denoted by $\dispadjequiv{\disp{x}}{\disp{y}}{f}$
for displayed objects $\disp{x} : \dob{\D}{x}$ and $\disp{y} : \dob{\D}{y}$.

\begin{defi}
\label{def:disp_univalence}
Let $\B$ be a bicategory and let $\D$ be a displayed bicategory over $\B$.
\begin{itemize}
  \item (\coqident{Bicategories.DisplayedBicats.DispUnivalence}{disp_univalent_2_1}) Given a path $p : f = g$,
    we define a function $\dispidtoisoloc{p} : \disp{f} =_{p} \disp{g} \rightarrow \dispinvcell{\disp{f}}{\disp{g}}{\idtoisoloc{f}{g} \> p}$ using path induction
    for all 1-cells $\disp{f} : \dmor{\disp{x}}{\disp{y}}{f}$ and $\disp{g} : \dmor{\disp{x}}{\disp{y}}{g}$.
    We say that $\D$ is \textbf{locally univalent} if $\idtoisoloc{\disp{f}}{\disp{g}}$ is an equivalence
    for all displayed 1-cells $\disp{f} : \dmor{\disp{x}}{\disp{y}}{f}$ and $\disp{g} : \dmor{\disp{x}}{\disp{y}}{g}$, and paths $p : f = g$.
  \item (\coqident{Bicategories.DisplayedBicats.DispUnivalence}{disp_univalent_2_0}) Given a path $p : x = y$,
    we define a $\dispidtoisoglob{p} : \disp{x} =_{p} \disp{y} \rightarrow \dispadjequiv{\disp{x}}{\disp{y}}{\idtoisoglob{x}{y}(p)}$ using path induction
    for all objects $\disp{x} : \dob{\D}{x}$ and $\disp{y} : \dob{\D}{y}$.
    We say that $\D$ is \textbf{globally univalent} if $\idtoisoglob{\disp{x}}{\disp{y}}$ is an equivalence
    for all objects $\disp{x} : \dob{\D}{x}$ and $\disp{y} : \dob{\D}{y}$ and paths $p : x = y$.
  \item (\coqident{Bicategories.DisplayedBicats.DispUnivalence}{disp_univalent_2}) The displayed bicategory $\D$ is \textbf{univalent} if it is both locally and globally univalent.
\end{itemize}
\end{defi}

If we have a displayed bicategory $\D$ over $\B$ and if both $\B$ and $\D$ are univalent, then $\total{\D}$ is univalent as well \cite[Theorem 7.4]{bicatspaper}.

In numerous different examples, we look at displayed bicategories whose displayed 2-cells are actually trivial in a certain sense.
More precisely, we look at two properties.
One of them expresses that between all displayed 1-cells $\disp{f} : \dmor{\disp{x}}{\disp{y}}{f}$ and $\disp{g} : \dmor{\disp{x}}{\disp{y}}{g}$, there is at most one displayed 2-cell.
A displayed bicategory satisfying that property, is called a \emph{local preorder}: it expresses that the type $\dtwo{\disp{f}}{\disp{g}}{\tau}$ is a proposition.
The other property is \emph{locally groupoidal}, and it says that every displayed 2-cell over an invertible 2-cell is again invertible.
One way to construct such displayed bicategories, is as follows.

\begin{defi}[\coqident{Bicategories.DisplayedBicats.Examples.DisplayedCatToBicat}{disp_cell_unit_bicat}]
\label{def:dispcat-to-dispbicat}
Suppose, we have a bicategory $\B$ and
\begin{itemize}
	\item for each $x : \B$ a type $\D_x$;
	\item for every 1-cell $f : x \onecell y$ and elements $\disp{x} : \D_x$ and $\disp{y} : \D_y$ a type $\D_{\disp{x}, \disp{y}, f}$;
	\item for every $x : \B$ and $\disp{x} : \D_x$ an inhabitant of $\D_{\disp{x}, \disp{x}, \id{x}}$;
	\item for all $\disp{f} : \D_{\disp{x}, \disp{y}, f}$ and $\disp{g} : \D_{\disp{y}, \disp{z}, g}$ an inhabitant of $\D_{\disp{x}, \disp{z}, \comp{f}{g}}$.
\end{itemize}
Then we get a displayed bicategory over $\B$ whose type of 2-cells between every pair of 1-cells is defined to be the unit type.
\end{defi}

Note that every displayed bicategory constructed using \cref{def:dispcat-to-dispbicat} is both a local preorder and locally groupoidal.
To illustrate the notion of displayed bicategory, let us look at some examples.
These were already considered in previous work \cite{ahrens2022semantics,wullaert2022univalent}.

\begin{exa}[\coqident{Bicategories.Core.Examples.StructuredCategories}{univ_cat_with_terminal_obj}]
\label{def:terminal}
Using \cref{def:dispcat-to-dispbicat}, we define a displayed bicategory $\dTerminal$ over $\UnivCat$.
\begin{itemize}
	\item Objects over $\C$ are terminal objects $T_\C$ in $\C$.
	\item Displayed 1-cells over $F : \C_1 \onecell \C_2$ are proofs that $F$ preserves terminal objects.
\end{itemize}
We define $\Terminal$ to be $\total{\dTerminal}$.
\end{exa}

\begin{exa}[\coqident{Bicategories.MonoidalCategories.UnivalenceMonCat.FinalLayer}{disp_bicat_univmon}]
\label{def:symmoncat}
We define a displayed bicategory $\dMonCat$ over $\UnivCat$ as follows.
\begin{itemize}
	\item Objects over $\C$ are monoidal structures over $\C$;
	\item 1-cells over $F : \C_1 \onecell \C_2$ are structures that $F$ is a lax monoidal functor;
	\item 2-cells over $\theta : F \twocell G$ are proofs that $\theta$ is a monoidal transformation.
\end{itemize}
We denote its total bicategory by $\MonCat$.
\end{exa}

Note that similarly, one can define a displayed bicategory $\dSymMonCat$ over $\UnivCat$ whose total bicategory $\SymMonCat$ is the bicategory of symmetric monoidal categories.
In the remainder of this paper, we use several operations on displayed bicategories.

\begin{defi}
\label{def:disp-operations}
We have the following operations on displayed bicategories:
\begin{enumerate}
	\item (\coqident{Bicategories.DisplayedBicats.Examples.Prod}{disp_dirprod_bicat}) Given displayed bicategories $\D$ and $\D'$ over $\B$, we construct a displayed bicategory $\D \times \D'$ over $\B$ whose objects, 1-cells and 2-cells are pairs of objects, 1-cells, and 2-cells in $\D$ and $\D'$ respectively.
	\item (\coqident{Bicategories.DisplayedBicats.Examples.Sigma}{sigma_bicat}) Suppose that we have a displayed bicategory $\D$ over $\B$ and a displayed bicategory $\D'$ over $\total{\D}$.
	We construct a displayed bicategory $\sigmaD{\D'}$ over $\B$ by setting the objects over $x$ to be pairs of $(\disp{x}, \disp{\disp{x}})$ of objects $\disp{x} : \dob{D}{x}$ and $\disp{\disp{x}} : \dob{\D'}{(x, \disp{x})}$.
	\item (\coqident{Bicategories.DisplayedBicats.Examples.FullSub}{disp_fullsubbicat}) Let $\B$ be a bicategory and let $P$ be a predicate on the objects of $\B$. 
	We define a displayed bicategory $\dfullsub{P}$ over $\B$ by setting the displayed objects over $x$ to be proofs of $P(x)$ and the types of 1-cells between every pair of objects and of 2-cells between every pair of 1-cells are defined to be the unit type.
\end{enumerate}
\end{defi}

The last notion we discuss, is the notion of a \emph{section} of a displayed bicategory.
Intuitively, a section assigns to every $x$ a displayed object $\disp{x} : \D_x$ in a pseudofunctorial way.
As such every section $s$ of $\D$ induces a pseudofunctor $\Section{s} : \B \onecell \total{D}$ such that every $x : \B$ is \emph{definitionally} equal to $\proj{\D}(\Section{s}(x))$.

\begin{defi}[\coqident{Bicategories.DisplayedBicats.DispBicatSection}{section_disp_bicat}]
\label{def:section}
A \textbf{section} $s$ of $\D$ consists of the following data.
\begin{itemize}
	\item For every object $x : \B$ a displayed object $\sob{s}{x} : \dob{\D}{x}$;
	\item for every 1-cell $f : x \onecell y$ a displayed 1-cell $\smor{s}{f} : \dmor{\sob{s}{x}}{\sob{s}{y}}{f}$;
	\item for every 2-cell $\theta : f \twocell g$ a displayed 2-cell $\stwo{s}{\theta} : \dtwo{\smor{s}{f}}{\smor{s}{g}}{\theta}$;
	\item for every $x : \B$ an invertible 2-cell $\sid{s}{x} : \dtwo{\disp{\id{\sob{s}{x}}}}{\smor{s}{\id{x}}}{\id{\id{x}}}$;
	\item for every $f : x \onecell y$ and $g : y \onecell z$ an invertible 2-cell $\scomp{s}{f}{g} : \dtwo{\comp{\smor{s}{f}}{\smor{s}{g}}}{\smor{s}{\comp{f}{g}}}{\id{\comp{f}{g}}}$.
\end{itemize}
We also require several coherences reminiscent of the laws of pseudofunctors, and for a precise description of those, we refer the reader to the formalization.
\end{defi}

\begin{problem}
\label{prob:section-to-psfunctor}
Given a displayed bicategory $\D$ on $\B$ and a section $s$ on $\D$, to construct a pseudofunctor $\Section{s} : \B \onecell \total{D}$.
\end{problem}

\begin{construction}[\coqident{Bicategories.DisplayedBicats.DispBicatSection}{section_to_psfunctor}]{prob:section-to-psfunctor}
\label{constr:section-to-psfunctor}
We define the pseudofunctor $\Section{s}$ as follows: it sends objects $x$ to the pair $(x, \sob{s}{x})$, 1-cells $f$ to $(f, \sob{s}{f})$, and 2-cells $\tau : f \twocell g$ to $(\tau, \sob{s}{\tau})$.
\end{construction}

\section{The Bicategory of Monads}
\label{sec:bicat-mnd}
Two concepts play a key role in the formal theory of monads: the bicategory of monads $\mnd{\B}$ internal to $\B$ and Eilenberg-Moore objects.
In this section, we study the first concept, and we use displayed bicategories to construct the bicategory $\mnd{\B}$ from a bicategory $\B$.
We also show that $\mnd{\B}$ must be univalent if $\B$ is.

The main idea behind the construction is to split up monads in several independent parts.
We first define $\dEndo{\B}$ whose displayed objects over $x$ are 1-cells $e : x \onecell x$.
After that, we define $\dUnit{\B}$ and $\dMult{\B}$ whose displayed objects are the unit and multiplication of the monad respectively.
We finally take a full subcategory for the monad laws.
A pictorial summary can be found in \cref{fig:construction}.

\begin{figure}
  \[
    \begin{tikzcd}
      & {\dmnd{\B}} \\
      {\dUnit{\B}} & {\dMndData{\B}} & {\dMult{\B}} \\
      & {\dEndo{\B}} \\
      & \B
      \arrow[from=1-2, to=2-2]
      \arrow[from=2-1, to=3-2]
      \arrow[from=2-2, to=2-1]
      \arrow[from=2-2, to=2-3]
      \arrow[from=2-2, to=3-2]
      \arrow[from=2-3, to=3-2]
      \arrow[from=3-2, to=4-2]
    \end{tikzcd}
\]
\caption{Construction of the bicategory of monads}
\label{fig:construction}
\end{figure}
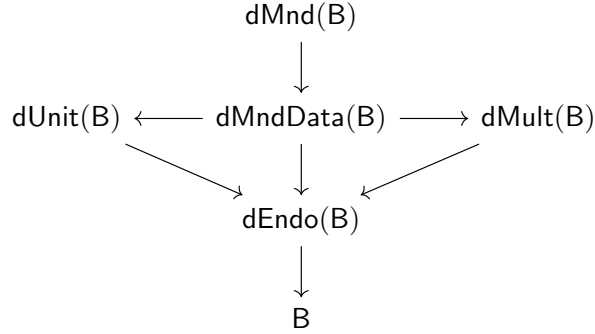

\begin{defi}[\coqident{Bicategories.DisplayedBicats.Examples.EndoMap}{disp_end}]
\label{def:endo}
Let $\B$ be a bicategory.
Define a displayed bicategory $\dEndo{\B}$ over $\B$ as follows.
\begin{itemize}
	\item The displayed objects over an object $x$ are 1-cells $e_x : x \onecell x$;
	\item the displayed 1-cells over a 1-cell $f : x \onecell y$ from $e_x : x \onecell x$ to $e_y : y \onecell y$ are 2-cells $\theta_f : f \cdot e_y \twocell e_x \cdot f$;
	\[
	\begin{tikzcd}
		x & x \\
		y & y
		\arrow[""{name=0, anchor=center, inner sep=0}, "{e_y}"', from=2-1, to=2-2]
		\arrow[""{name=1, anchor=center, inner sep=0}, "{e_x}", from=1-1, to=1-2]
		\arrow["f", from=1-2, to=2-2]
		\arrow["f"', from=1-1, to=2-1]
		\arrow["{\theta_f}"', shorten <=4pt, shorten >=4pt, Rightarrow, from=0, to=1]
	\end{tikzcd}
	\]
	\item the displayed 2-cells over a 2-cell $\tau : f \twocell g$ from $\theta_f : \comp{f}{e_y} \twocell \comp{e_x}{f}$ to $\theta_g : \comp{g}{e_y} \twocell \comp{e_x}{g}$ are proofs that the following diagram commutes.
	\[
	\begin{tikzcd}
		{\comp{f}{e_y}} & {\comp{e_x}{f}} \\
		{\comp{g}{e_y}} & {\comp{e_x}{g}}
		\arrow["{\theta_f}", from=1-1, to=1-2, Rightarrow]
		\arrow["{\theta_g}"', from=2-1, to=2-2, Rightarrow]
		\arrow["{\tau \whiskerr e_y}"', from=1-1, to=2-1, Rightarrow]
		\arrow["{e_x \whiskerl \tau}", from=1-2, to=2-2, Rightarrow]
	\end{tikzcd}
	\]
\end{itemize}
We define $\Endo{B}$ by $\total{\dEndo{\B}}$.
\end{defi}

Next we define two displayed bicategories over $\Endo{\B}$.
For both, we use \cref{def:dispcat-to-dispbicat}.

\begin{defi}[\coqident{Bicategories.DisplayedBicats.Examples.MonadsLax}{disp_add_unit}]
\label{def:unit}
Given a bicategory $\B$, we define the displayed bicategory $\dUnit{\B}$ over $\Endo{\B}$ as follows.
\begin{itemize}
	\item The displayed objects over $(x, e_x)$ are 2-cells $\eta_x : \id{x} \twocell e_x$;
	\item the displayed 1-cells over $(f, \theta_f)$ where $f : x \onecell y$ and $\theta_f : \comp{f}{e_y} \twocell \comp{e_x}{f}$ from $\eta_x : \id{x} \twocell e_x$ to $\eta_y : \id{y} \twocell e_y$ are proofs that the following diagram commutes
	\[
	\begin{tikzcd}
		f && {\comp{\id{x}}{f}} \\
		{\comp{f}{\id{y}}} & {\comp{f}{e_y}} & {\comp{e_x}{f}}
		\arrow["{\rinvunitor{f}}"', from=1-1, to=2-1, Rightarrow]
		\arrow["{f \whiskerl \eta_y}"', from=2-1, to=2-2, Rightarrow]
		\arrow["{\theta_f}"', from=2-2, to=2-3, Rightarrow]
		\arrow["{\linvunitor{f}}", from=1-1, to=1-3, Rightarrow]
		\arrow["{\eta_x \whiskerr f}", from=1-3, to=2-3, Rightarrow]
	\end{tikzcd}
	\]
\end{itemize}
\end{defi}

\begin{defi}[\coqident{Bicategories.DisplayedBicats.Examples.MonadsLax}{disp_add_mu}]
\label{def:mult}
Given a bicategory $\B$, we define the displayed bicategory $\dMult{\B}$ over $\Endo{\B}$ as follows.
\begin{itemize}
	\item The displayed objects over $(x, e_x)$ are 2-cells $\mu_x : \comp{e_x}{e_x} \twocell e_x$;
	\item the displayed 1-cells over $(f, \theta_f)$ where $f : x \onecell y$ and $\theta_f : \comp{f}{e_y} \twocell \comp{e_x}{f}$ from $\mu_x : \comp{e_x}{e_x} \twocell e_x$ to $\mu_y : \comp{e_y}{e_y} \twocell e_y$ are proofs that the following diagram commutes
	\[
	\begin{tikzcd}
		{\comp{f}{(\comp{e_y}{e_y})}} && {\comp{f}{e_y}} && {\comp{e_x}{f}} \\
		{\comp{(\comp{f}{e_y})}{e_y}} & {\comp{(\comp{e_x}{f})}{e_y}} & {\comp{e_x}{(\comp{f}{e_y})}} & {\comp{e_x}{(\comp{e_x}{f})}} & {\comp{(\comp{e_x}{e_x})}{f}}
		\arrow["{\rassociator{f}{e_y}{e_y}}"', from=1-1, to=2-1, Rightarrow]
		\arrow["{\theta_f \whiskerr e_y}"', from=2-1, to=2-2, Rightarrow]
		\arrow["{\lassociator{e_x}{f}{e_y}}"', from=2-2, to=2-3, Rightarrow]
		\arrow["{e_x \whiskerl \theta_f}"', from=2-3, to=2-4, Rightarrow]
		\arrow["{\rassociator{e_x}{e_x}{f}}"', from=2-4, to=2-5, Rightarrow]
		\arrow["{\mu_x \whiskerr f}"', from=2-5, to=1-5, Rightarrow]
		\arrow["{\theta_f}", from=1-3, to=1-5, Rightarrow]
		\arrow["{f \whiskerl \mu_y}", from=1-1, to=1-3, Rightarrow]
	\end{tikzcd}
	\]
\end{itemize}
\end{defi}

Next we define $\dMndData{\B}$ to be $\sigmaD{\dUnit{\B} \times \dMult{\B}}$ and we denote its total bicategory by $\MndData{\B}$.
To obtain $\mnd{\B}$, we take a full subbicategory.

\begin{defi}[\coqident{Bicategories.DisplayedBicats.Examples.MonadsLax}{disp_mnd}]
\label{def:mnd}
For a bicategory $\B$, we define the predicate $\isMnd$ over $\MndData{\B}$ by saying that the data $(x, (e_x, (\eta_x, \mu_x)))$ is a monad if the following diagrams commute.
\[
\begin{tikzcd}[sep=large]
	{e_x} & {\comp{e_x}{\id{x}}} & {\comp{e_x}{e_x}} & {\comp{\id{x}}{e_x}} & {e_x} \\
	&& {e_x}
	\arrow["{\mu_x}", from=1-3, to=2-3, Rightarrow]
	\arrow["{e_x \whiskerl \eta_x}", from=1-2, to=1-3, Rightarrow]
	\arrow["{\rinvunitor{e_x}}", from=1-1, to=1-2, Rightarrow]
	\arrow["{\id{e_x}}"', from=1-1, to=2-3, Rightarrow]
	\arrow["{\linvunitor{e_x}}"', from=1-5, to=1-4, Rightarrow]
	\arrow["{\eta_x \whiskerr e_x}"', from=1-4, to=1-3, Rightarrow]
	\arrow["{\id{e_x}}", from=1-5, to=2-3, Rightarrow]
\end{tikzcd}
\]
\[
\begin{tikzcd}[row sep = huge]
	{\comp{e_x}{(\comp{e_x}{e_x})}} && {\comp{e_x}{e_x}} \\
	{\comp{(\comp{e_x}{e_x})}{e_x}} & {\comp{e_x}{e_x}} & {e_x}
	\arrow["{\mu_x}", from=1-3, to=2-3, Rightarrow]
	\arrow["{\mu_x}"', from=2-2, to=2-3, Rightarrow]
	\arrow["{\mu_x \whiskerr e_x}"', from=2-1, to=2-2, Rightarrow]
	\arrow["{\rassociator{e_x}{e_x}{e_x}}"', from=1-1, to=2-1, Rightarrow]
	\arrow["{e_x \whiskerl \mu_x}", from=1-1, to=1-3, Rightarrow]
\end{tikzcd}
\]
We define $\dmnd{\B}$ to be $\sigmaD{\dfullsub{\isMnd}}$, and we denote its total bicategory by $\mnd{\B}$.
\end{defi}

Note that the predicate $\isMnd$ contains all monad laws.
Alternatively, we could have defined a displayed bicategory for each monad law.
We refrained from doing so, because that would not further simplify the proof that $\mnd{\B}$ is univalent.

Before we continue, let us discuss the objects, 1-cells, and 2-cells of $\mnd{\B}$, and fix the relevant notation for the remainder of this paper.
This also explains what the displayed objects, 1-cells, and 2-cells in $\dmnd{\B}$ are.
Monads $m$ in $\B$ are inhabitants of the $\sum$-type whose components are given by
\begin{itemize}
	\item an object $\monadob{m} : \B$;
	\item a 1-cell $\monadendofull{m} : \monadob{m} \onecell \monadob{m}$;
	\item a 2-cell $\monadunit{m} : \id{\monadob{m}} \twocell \monadendo{m}$;
	\item a 2-cell $\monadmult{m} : \comp{\monadendo{m}}{\monadendo{m}} \twocell \monadendo{m}$.
\end{itemize}
Note that this data must satisfy the laws given in \cref{def:mnd}.
If no confusion arises, we write $\monadendo{m}$ instead of $\monadendofull{m}$.

Monad morphisms $f : m_1 \onecell m_2$ between monads $m_1$ and $m_2$ are inhabitants of the $\sum$-type whose components consists of a 1-cell $\monadmorobfull{f} : \monadob{m_1} \onecell \monadob{m_2}$ and a 2-cell $\monadmorendo{f} : \comp{\monadmorob{f}}{\monadendo{m_2}} \twocell \comp{\monadendo{m_1}}{\monadmorob{f}}$,
such that this data satisfies the laws described in \cref{def:unit,def:mult}.
Note that there also are several components in this sigma type that are given by the unit type,
because we take a full subbicategory (\cref{def:mnd}).
However, we can safely ignore those, because they do not add any information.
Lastly, the type of monad 2-cells $\gamma : f_1 \twocell f_2$ between monad morphisms is given by a $\sum$-type that contains a 2-cell $\monadcellobfull{\gamma} : \monadmorob{f_1} \twocell \monadmorob{f_2}$.
In addition, 2-cells of monads satisfy the requirement given in \cref{def:endo}.
For 2-cells of monads there also are several components that are given by the unit type, which we added in \cref{def:unit,def:mult,def:mnd},
and again we can safely ignore those.
We write $\monadmorob{f}$ instead of $\monadmorobfull{f}$ and $\monadcellob{\gamma}$ instead of $\monadcellobfull{\gamma}$ if no confusion arises.

In \cite[Definition 6.7]{bicatspaper}, another bicategory of monads internal to $\B$ is defined.
However, here we consider lax morphisms, while they only consider pseudomorphisms, i.e., morphisms $f : m_1 \onecell m_2$ for which $\monadmorendo{f}$ is invertible.
For the development of the formal theory of monads, lax morphisms are needed.

Next we look at the univalence of $\mnd{\B}$, and to prove it, we use displayed univalence.
More specifically, it suffices to prove the displayed univalence of $\dEndo{\B}$, $\dUnit{\B}$, and $\dMult{\B}$.
We also need $\B$ to be univalent.

\begin{prop}[\coqident{Bicategories.DisplayedBicats.Examples.MonadsLax}{is_univalent_2_mnd}]
\label{prop:mnd-univ}
If $\B$ is univalent, then so is $\mnd{\B}$.
\end{prop}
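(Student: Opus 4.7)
The plan is to exploit the modularity of the construction in \cref{def:endo,def:unit,def:mult,def:mnd} and apply the standard principle that the total bicategory of a displayed univalent bicategory over a univalent bicategory is univalent. Since $\B$ is assumed univalent and $\mnd{\B} = \total{\dmnd{\B}}$, it suffices to show that $\dmnd{\B}$ is displayed univalent. Using the closure of displayed univalence under the operations $\sigmaD{-}$, $\times$, and $\dfullsub{-}$ from \cref{def:disp-operations}, this reduces to checking displayed univalence for each building block: $\dEndo{\B}$ over $\B$, then $\dUnit{\B}$ and $\dMult{\B}$ over $\Endo{\B}$, and finally $\dfullsub{\isMnd}$ over $\MndData{\B}$.

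First I would handle $\dEndo{\B}$. Its displayed 2-cells are propositional — they are equations between 2-cells in $\B$, and 2-cells form a set — so local displayed univalence reduces to showing that, over the identity 2-cell, both the displayed invertible 2-cell type and the identity type collapse to the bare equation $\theta_f = \theta_g$. For global displayed univalence one unfolds what a displayed adjoint equivalence between $e_x, e_x' : x \onecell x$ over the identity adjoint equivalence on $x$ amounts to: after using the unitors, the coherence square for the morphism part $\theta$ yields an invertible 2-cell $\invcell{e_x}{e_x'}$ in $\B$, and the local univalence of $\B$ converts this into an equality $e_x = e_x'$.

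Next I would dispatch $\dUnit{\B}$ and $\dMult{\B}$. Both are built via \cref{def:dispcat-to-dispbicat}, hence are locally propositional and locally groupoidal, and local displayed univalence follows automatically. Global displayed univalence in each case reduces to showing that in every fiber — the set of admissible units $\eta_x$, respectively multiplications $\mu_x$ — the identity type matches the trivial notion of displayed adjoint equivalence, which is immediate because the fibers are propositions up to the ambient equality of 2-cells in $\B$ and the morphism coherence is itself propositional. Finally, $\dfullsub{\isMnd}$ is displayed univalent because $\isMnd$ is a proposition: it consists entirely of equations between 2-cells in $\B$.

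The main obstacle will be the global displayed univalence of $\dEndo{\B}$, since this is the one step where actual non-propositional content appears: one must carefully unwind a displayed adjoint equivalence, identify the coherent invertible 2-cell between $e_x$ and $e_x'$ hidden in the data, and then invoke the local univalence of $\B$. Everything else is either a formal consequence of propositionality of the relevant fibers and coherences, or an application of the closure of displayed univalence under the sigma, product, and full-subbicategory combinators.
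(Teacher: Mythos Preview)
Your proposal is correct and matches the paper's approach exactly: the paper states just before \cref{prop:mnd-univ} that one uses displayed univalence, reducing the problem to the displayed univalence of $\dEndo{\B}$, $\dUnit{\B}$, and $\dMult{\B}$ (together with the trivial full-subbicategory layer), and that univalence of $\B$ is needed. Your identification of the global displayed univalence of $\dEndo{\B}$ as the only substantive step, and your handling of the remaining layers via their local propositionality and the closure properties of displayed univalence under $\sigmaD{-}$, $\times$, and $\dfullsub{-}$, is precisely what the paper does (the detailed verifications being delegated to the formalization).
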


\begin{proof}[Proof sketch]
To establish the univalence of $\mnd{\B}$, it sufficies to show that each of $\dEndo{\B}$, $\dUnit{\B}$, $\dMult{\B}$, and $\dmnd{\B}$ is univalent.
For $\dmnd{\B}$, $\dUnit{\B}$, and $\dMult{\B}$ one can use more general statements from the literature.
For $\dEndo{\B}$ one uses the same techniques as in the literature \cite[Theorem 9.10]{bicatspaper}.
\end{proof}

We also characterize invertible 2-cells and adjoint equivalences in $\mnd{\B}$.
For the invertible 2-cells, we can take a direct approach and we give a concrete definition of the inverse.

\begin{prop}[\coqident{Bicategories.DisplayedBicats.Examples.MonadsLax}{is_invertible_mnd_2cell}]
\label{prop:mnd-inv2cell}
A 2-cell $\gamma$ between monad morphisms is invertible, if the underlying 2-cell $\monadcellobfull{\gamma} : \monadmorob{f_1} \twocell \monadmorob{f_2}$ is invertible.
\end{prop}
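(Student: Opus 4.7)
The plan is to exhibit an inverse of $\gamma$ in $\mnd{\B}$ by reducing the problem to showing that the displayed bicategory $\dmnd{\B}$ is locally groupoidal (\cref{def:local-dispbicat}). Recall from \cref{def:mnd} that $\dmnd{\B}$ is constructed as $\sigmaD{\dfullsub{\isMnd}}$ where the intermediate displayed bicategory is $\sigmaD{\dUnit{\B} \times \dMult{\B}}$ over $\Endo{\B} = \total{\dEndo{\B}}$. Since invertibility of 2-cells in a total bicategory $\total{\D}$ follows from invertibility of the projection together with invertibility of each displayed 2-cell, the problem reduces to checking that every displayed 2-cell in $\dmnd{\B}$ lying over $\monadcellob{\gamma}$ is invertible.

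I would handle the easy layers first. The displayed bicategories $\dUnit{\B}$, $\dMult{\B}$, and $\dfullsub{\isMnd}$ are all built via \cref{def:dispcat-to-dispbicat}, so their displayed 2-cells are inhabitants of the unit type; these are trivially invertible, so all three are locally groupoidal by construction. Moreover, the operations $\sigmaD{-}$ and $-\times-$ from \cref{def:disp-operations} preserve the locally groupoidal property straightforwardly, so it remains only to verify the property for the base layer $\dEndo{\B}$.

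For $\dEndo{\B}$, a displayed 2-cell over $\tau : f \twocell g$ is a proof of commutativity of the square relating $\theta_f$ and $\theta_g$ via $\tau \whiskerr e_y$ and $e_x \whiskerl \tau$ (see \cref{def:endo}), hence a proposition. Given an invertible $\tau$, I would construct the required displayed 2-cell over $\tau^{-1}$ by whiskering the original commutative square with $\tau^{-1}$ on both sides and using that $\tau \vcomp \tau^{-1}$ and $\tau^{-1} \vcomp \tau$ are the identity. Because the displayed 2-cells form a proposition, the coherence conditions for being the inverse of a displayed 2-cell hold automatically. Thus $\dEndo{\B}$ is locally groupoidal.

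Combining these observations, the displayed 2-cell over $\monadcellob{\gamma}$ underlying $\gamma$ is invertible in $\dmnd{\B}$, and together with the hypothesis that $\monadcellob{\gamma}$ is invertible in $\B$ this assembles into an inverse of $\gamma$ in $\mnd{\B}$. The only mildly delicate point is tracking the stack of displayed structures correctly and invoking the general transport of invertibility through $\sigmaD{-}$; each individual layer is handled either by the triviality of unit-valued 2-cells or by a short diagram chase with $\tau^{-1}$.
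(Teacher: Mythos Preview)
Your proposal is correct and follows precisely the modular decomposition the paper sets up: invertibility in $\total{\dmnd{\B}}$ reduces to invertibility in $\B$ together with local groupoidality of $\dmnd{\B}$, and the latter is obtained layer by layer, with the only nontrivial step being the diagram chase for $\dEndo{\B}$. This is the intended argument (the paper omits the written proof and defers to the formalization, but your reasoning matches the structure of the construction and the role given to \cref{def:local-dispbicat}).
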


\begin{proof}[Proof sketch]
The underlying 2-cell of the inverse of $\gamma$ is given by $\monadcellobfull{\gamma}^{-1}$.
For the verification that this is indeed a 2-cell of monads, we refer the reader to the formalization.
\end{proof}

Next we characterize adjoint equivalences in $\mnd{\B}$.
However, to do so, we do not use a direct approach, because the resulting construction is more complicated.
Instead, we assume that $\B$ is univalent, and we use equivalence induction (\cref{prop:equiv-ind}).

\begin{prop}[\coqident{Bicategories.DisplayedBicats.Examples.MonadsLax}{to_equivalence_mnd}]
\label{prop:mnd-adjequiv}
Let $\B$ be a univalent bicategory and let $f : m_1 \onecell m_2$ be a 1-cell in $\mnd{B}$.
If $\monadmorobfull{f}$ is an adjoint equivalence and $\monadmorendo{f}$ is an invertible 2-cell, then $f$ is an adjoint equivalence.
\end{prop}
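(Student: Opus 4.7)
The plan is to exploit the layered construction of $\mnd{\B}$ as the total bicategory of a stack of displayed bicategories over $\B$: first $\dEndo{\B}$, then $\dUnit{\B} \times \dMult{\B}$ over $\Endo{\B}$, and finally $\dfullsub{\isMnd}$ on top. A standard result on total bicategories says that a 1-cell $(h, \disp{h})$ is an adjoint equivalence in $\total{\D}$ whenever $h$ is an adjoint equivalence in the base and $\disp{h}$ is a displayed adjoint equivalence over it. It therefore suffices to produce a displayed adjoint equivalence at each of the four layers lying over the given $\monadmorob{f}$.

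The layers $\dUnit{\B}$, $\dMult{\B}$, and $\dfullsub{\isMnd}$ are all constructed via \cref{def:dispcat-to-dispbicat}, hence locally propositional and locally groupoidal. For such displayed bicategories a standard lemma applies: any displayed 1-cell over an adjoint equivalence automatically underlies a displayed adjoint equivalence, provided a displayed 1-cell in the opposite direction can be exhibited; the displayed unit and counit are then forced by local propositionality, and their invertibility by local groupoidality. For $\dUnit{\B}$ and $\dMult{\B}$ the required opposite-direction displayed 1-cells are the unit- and multiplication-preservation diagrams for $\monadmorob{f}^{-1}$, obtained by transporting those for $\monadmorob{f}$ across the adjoint equivalence $\monadmorob{f} \dashv \monadmorob{f}^{-1}$; this is already where the invertibility of $\monadmorendo{f}$ is used, since one must slide endomorphisms past $\monadmorob{f}^{-1}$ in the reverse direction. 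The $\dfullsub{\isMnd}$ layer is automatic because $m_1$ and $m_2$ are already monads.

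The substantive step is in $\dEndo{\B}$. Writing $g$ for an inverse of $\monadmorob{f}$ and $\eta, \epsilon$ for the unit and counit of $\monadmorob{f} \dashv g$, we must construct a 2-cell $\theta' : \comp{g}{\monadendo{m_1}} \twocell \comp{\monadendo{m_2}}{g}$ to serve as the displayed 1-cell on the inverse, and then verify that $(g, \theta')$ together with appropriate displayed unit and counit forms a displayed adjoint equivalence. The definition of $\theta'$ is the standard conjugate: one uses a unit/counit of the adjunction to produce an $\monadmorob{f}$ on one side, applies $(\monadmorendo{f})^{-1}$ to slide $\monadendo{m_1}$ across, and cancels the residual $\monadmorob{f}$ against $g$ using the other of $\eta, \epsilon$. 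The displayed unit and counit are then equalities in $\B$ that follow from the triangle identities together with cancellation of $\monadmorendo{f}$ with its inverse, and their invertibility as displayed 2-cells is automatic by the local groupoidality of $\dEndo{\B}$.

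The main obstacle is this diagram chase in $\dEndo{\B}$, i.e.\ producing $\theta'$ explicitly and checking the coherences of a displayed adjoint equivalence; the invertibility of $\monadmorendo{f}$ is essential at exactly this step. Once settled, the remaining layers contribute essentially automatically and the total-adjoint-equivalence characterisation assembles the pieces into the required adjoint equivalence in $\mnd{\B}$.
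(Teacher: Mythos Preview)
Your approach is correct, but it is genuinely different from the one the paper uses.  The paper does not build the inverse displayed data by hand.  Instead it exploits the univalence assumption on $\B$: since $\monadmorobfull{f}$ is an adjoint equivalence and $\B$ is univalent, one may invoke equivalence induction and assume without loss of generality that $\monadmorobfull{f}$ is the identity 1-cell on $\monadob{m_1} = \monadob{m_2}$.  Under that assumption $\monadmorendo{f}$ becomes an invertible 2-cell $\monadendo{m_2} \twocell \monadendo{m_1}$ (up to unitors), and all the coherences for a displayed adjoint equivalence collapse to near-trivial checks.  This is exactly why the proposition carries the hypothesis that $\B$ is univalent; the paper even remarks that the same trick proves that pointwise adjoint equivalences of pseudonatural transformations are adjoint equivalences.

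Your layered construction, by contrast, never invokes univalence: you build the conjugate $\theta'$ explicitly and verify the $\dEndo{\B}$ coherences from the triangle identities, then push the preservation witnesses for unit and multiplication across to the inverse.  This is the standard ``mates'' argument and it works in an arbitrary bicategory, so you are in fact proving a slightly stronger statement than the one in the paper.  The trade-off is that your route requires genuine bicategorical diagram chasing (the $\dEndo{\B}$ unit/counit squares and the transported $\dUnit{\B}$/$\dMult{\B}$ diagrams), whereas the paper's reduction-to-identity makes those checks essentially automatic.  Both are valid; the paper's is shorter if univalence is available, yours is more general and more explicit.
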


\begin{proof}
We use induction on adjoint equivalences.
This allows us to prove the following special case:
given an object $x : \B$ and two monad structures $m_1$ and $m_2$ over $x$,
then every displayed 1-cell $f$ over the identity $\id{x}$ is an adjoint equivalence
whenever $\monadmorendo{f}$ is an invertible 2-cell.
Note that the 2-cell $\monadmorendo{f}$ goes from $\monadendo{m_1} \cdot \id{x}$ to $\id{x} \cdot \, \monadendo{m_2}$.
We demonstrate how the desired inverse $r : m_2 \onecell m_1$ is constructed.
\begin{itemize}
  \item We define $\monadmorobfull{r}$ to be $\id{x}$;
  \item we define $\monadmorendo{r}$ to be the following composition.
    \[\begin{tikzcd}
	{\id{x} \cdot \, \monadendo{m_2}} & {\monadendo{m_2}} & {\monadendo{m_2} \cdot \id{x}} & {\id{x} \cdot \monadendo{m_1}} & {\monadendo{m_1}} & {\monadendo{m_1} \cdot \id{x}}
	\arrow["{\lunitor{}}", from=1-1, to=1-2, Rightarrow]
	\arrow["{\rinvunitor{}}", from=1-2, to=1-3, Rightarrow]
	\arrow["{\monadmorendo{f}^{-1}}", from=1-3, to=1-4, Rightarrow]
	\arrow["{\runitor{}}", from=1-4, to=1-5, Rightarrow]
	\arrow["{\linvunitor{}}", from=1-5, to=1-6, Rightarrow]
      \end{tikzcd}\]
\end{itemize}
Then $r$ gives rise to a displayed 1-cell over $\id{x}$ from $m_2$ to $m_1$,
and one can show that $f$ and $r$ form an adjoint equivalence.
\end{proof}

Equivalence induction simplifies this proof,
because it allows us to consider only the identity rather than arbitrary adjoint equivalences.

\section{Examples of Monads}
\label{sec:examples}
Next we look at examples of monads, and we start by characterizing monads internal to several bicategories.
Let us start by observing that monads in the bicategory $\UnivCat$ of categories correspond to monads as how they usually are defined in category theory.
However, since this notion of monad is defined in \emph{every} bicategory, we can also look at other bicategories, such as $\SymMonCat$,and $\Terminal$.

In a wide variety of applications, one is interested in monads in a bicategory of categories with some extra structure.
For example, symmetric monoidal monads are monads internal to the bicategory of symmetric monoidal categories.
Strong monads are monads in the bicategory of left actegories.
The bicategories in these two examples can be constructed as a total bicategory of some displayed bicategory over $\UnivCat$.
To characterize monads in total bicategories, we define \emph{displayed monads}.

\begin{defi}[\coqident{Bicategories.Monads.Examples.MonadsInTotalBicat}{disp_mnd}]
\label{def:disp-mnd}
Let $\B$ be a bicategory and let $\D$ be a displayed bicategory over $\B$ and suppose that $\D$ is a local preorder.
A \textbf{displayed monad} $\disp{m}$ over a monad $m$ in $\B$ consists of
\begin{itemize}
	\item a displayed object $\disp{\monadob{m}} : \dob{\D}{\monadob{m}}$;
	\item a displayed 1-cell $\disp{\monadendofull{m}} : \dmor{\disp{\monadob{m}}}{\disp{\monadob{m}}}{\monadendo{m}}$;
	\item a displayed 2-cell $\disp{\monadunit{m}} : \dtwo{\disp{\id{\monadob{m}}}}{\disp{\monadendo{m}}}{\monadunit{m}}$;
	\item a displayed 2-cell $\disp{\monadmult{m}} : \dtwo{\comp{\disp{\monadendo{m}}}{\disp{\monadendo{m}}}}{\disp{\monadendo{m}}}{\monadmult{m}}$.
\end{itemize}
\end{defi}

Note that we do not require any coherences in \cref{def:disp-mnd}.
Any coherence would hold vacuously, because the involved displayed bicategory is assumed to be a local preorder.

\begin{problem}
\label{prob:total-mnd}
Given a monad $m$ and a displayed monad $\disp{m}$ over $m$, to construct a monad $\total{m}$ in $\total{D}$.
\end{problem}

\begin{construction}[\coqident{Bicategories.Monads.Examples.MonadsInTotalBicat}{make_mnd_total_bicat}]{prob:total-mnd}
\label{constr:total-mnd}
We construct $\total{m}$ as follows.
\begin{itemize}
	\item We define the object $\monadob{\total{m}}$ to be $(\monadob{m}, \disp{\monadob{m}})$;
	\item we define the 1-cell $\monadendofull{\total{m}}$ to be $(\monadendo{m}, \disp{\monadendo{m}})$;
	\item we define the unit $\monadunit{\total{m}}$ to be $(\monadunit{m}, \disp{\monadunit{m}})$;
	\item we define multiplication $\monadmult{\total{m}}$ to be $(\monadmult{m}, \disp{\monadmult{m}})$. \qedhere
\end{itemize}
\end{construction}

We can also show that every monad in $\total{\D}$ gives rise to a monad $m$ in $\B$ and a displayed monad over $m$.
The monad in $\B$ is obtained by taking the first projection and the displayed monad by taking the second projection.

\begin{exa}[\coqident{Bicategories.Monads.Examples.MonadsInStructuredCategories}{make_mnd_univ_cat_with_terminal_obj}]
\label{exa:mnd-term}
Given a monad $m$ of categories,
a displayed monad in $\dTerminal$ over $m$ consists of a terminal object in $\monadob{m}$ and a proof that $\monadendo{m}$ preserves terminal objects.
\end{exa}

Analogously, we can characterize monads in $\SymMonCat$.
Next we look at monads in $\opbicat{\B}$ and $\cobicat{\B}$.

\begin{exa}
\label{exa:mnd-dual}
We characterize monads in $\opbicat{\B}$ and $\cobicat{\B}$.
\begin{itemize}
	\item (\coqfile{Bicategories.Monads.Examples}{MonadsInOp1Bicat}) Monads in $\opbicat{\B}$ are the same as monads in $\B$.
	However, 1-cells in $\mnd{\opbicat{\B}}$ are different from 1-cells in $\mnd{\B}$.
	If we have a 1-cell $f : m_1 \onecell m_2$ in $\mnd{\opbicat{\B}}$, then the cell $\monadmorendo{f}$ gives rise to a 2-cell $\comp{\monadendo{m_2}}{\monadmorob{f}} \twocell \comp{\monadmorob{f}}{\monadendo{m_1}}$.
	We define \emph{oplax monad morphisms} in $\B$ to be 1-cells in $\opbicat{\mnd{\opbicat{\B}}}$. 
	\item (\coqfile{Bicategories.Monads.Examples}{MonadsInOp2Bicat}) Suppose, we have $m : \mnd{\cobicat{\B}}$.
	Then $\monadob{m} : \B$ and we also have a 1-cell $\monadendo{m} : \monadob{m} \onecell \monadob{m}$ in $\B$.
	However, since the direction of the 2-cells are reversed in $\cobicat{\B}$, the 2-cells $\monadunit{m}$ and $\monadmult{m}$ give rise to a 2-cell $\monadendo{m} \twocell \id{\monadob{m}}$ and $\monadendo{m} \twocell \comp{\monadendo{m}}{\monadendo{m}}$ respectively.
	As such, monads in $\cobicat{\B}$ are the same as comonads in $\B$.
\end{itemize}
\end{exa}

\begin{exa}[\coqident{Bicategories.Monads.Examples.MonadsInMonads}{mnd_mnd_to_distr_law}]
\label{exa:distrlaw}
Objects in $m : \mnd{\mnd{\B}}$ are distributive laws between monads.
To see why, observe that the object $\monadob{m}$ is a monad in $\B$.
In addition, we can construct another monad $m'$ in $\B$ as follows.
\begin{itemize}
	\item The object $\monadob{m'}$ is $\monadob{\monadob{m}}$;
	\item the endomorphism is $\monadmorob{\monadendo{m}}$, which is a 1-cell from $\monadob{\monadob{m}}$ to $\monadob{\monadob{m}}$;
	\item the unit and multiplication are the underlying 2-cells of $\monadunit{m}$ and $\monadmult{m}$ respectively.
\end{itemize}
The 2-cell $\monadmorendo{\monadendo{m}}$ is the 2-cell of the distributive law, and the laws are the proofs that $\monadunit{m}$ and $\monadmult{m}$ are 2-cells in $\mnd{\B}$.
\end{exa}

\begin{exa}[\coqident{Bicategories.Monads.MixedDistributiveLaws}{to_mixed_distr_law}]
\label{exa:mixeddistrlaw}
Let $\B$ be a bicategory and suppose that we have a monad $l : \mnd{\cobicat{\mnd{\cobicat{\B}}}}$.
Then $l$ gives rise to a \emph{mixed distributive law} in $\B$ \cite{bohm20112,power2002combining}.
To see why, first note that we have a comonad $\monadob{l}$ in $\B$, which we denote by $c$.
In addition, we have a monad $m$ in $\B$ which is defined as follows.
\begin{itemize}
  \item The object $\monadob{m}$ is defined to be $\monadob{\monadob{l}}$;
  \item the endomorphism is defined to be $\monadmorob{\monadendo{l}}$;
  \item the unit and multiplication are the underlying 2-cells of $\monadunit{l}$ and $\monadmult{l}$ respectively.
\end{itemize}
Note that we have a 2-cell $\monadmorendo{\monadendo{l}} : \monadmorob{\monadendo{c}} \cdot \monadmorob{\monadendo{m}} \twocell \monadmorob{\monadendo{m}} \cdot \monadmorob{\monadendo{c}}$,
which is the underlying 2-cell of the mixed distributive law.
The laws follow from the fact that $\monadunit{l}$ and $\monadmult{l}$ are 2-cells of monads.
\end{exa}

We can also look at \emph{iterated distributive laws}, which are monads in $\mndn{\B}{n}$ \cite{cheng2011iterated}.

Next we give two general constructions of monads.
First of all, we consider the identity monad: on every object $x : \B$, we construct a monad $\idmonad{x}$.
This construction gives rise to a pseudofunctor $\idmonadpsfunctor{\B} : \B \onecell \mnd{\B}$,
which we construct using sections (\cref{def:section}).

\begin{problem}
\label{prob:sec-mnd}
Given a bicategory $\B$, to construct a section on $\dmnd{\B}$.
\end{problem}

\begin{construction}[\coqident{Bicategories.PseudoFunctors.Examples.MonadInclusion}{mnd_section_disp_bicat}]{prob:sec-mnd}
\label{constr:sec-mnd}
To construct the desired section, we define the identity monad $\idmonad{x}$ for every $x$.
\begin{itemize}
	\item The object is $x$;
	\item the 1-cell is $\id{x} : x \onecell x$;
	\item the unit is $\id{\id{x}} : \id{x} \twocell \id{x}$;
	\item the multiplication is $\lunitor{\id{x}} : \comp{\id{x}}{\id{x}} \twocell \id{x}$. \qedhere
\end{itemize} 
\end{construction}

In the remainder, we only use the pseudofunctor arising from Constructions \ref{constr:section-to-psfunctor} and \ref{constr:sec-mnd}, and this pseudofunctor is denoted as $\idmonadpsfunctor{\B} : \B \onecell \mnd{\B}$.
Second, every monad $m : \mnd{\B}$ gives rise to a monad of categories.
We use this construction in \cref{sec:em-ob}.

\begin{problem}
\label{prob:hom-mnd}
Given a monad $m$ in a bicategory $\B$ and an object $w : \B$, to construct a monad $\hommonad{w}{m}$ on $\homC{\B}{w}{\monadob{m}}$.
\end{problem}

\begin{construction}[\coqident{Bicategories.Monads.Examples.ToMonadInCat}{mnd_to_cat_Monad}]{prob:hom-mnd}
\label{constr:hom-mnd}
The monad $\hommonad{w}{m}$ is defined as follows.
\begin{itemize}
	\item The endofunctor is $\postcomp{\monadendo{m}}{w} : \homC{\B}{w}{\monadob{m}} \onecell \homC{\B}{w}{\monadob{m}}$;
	\item for every $f : w \onecell \monadob{m}$, the unit is defined to be the following composition;
	\[
	\begin{tikzcd}[column sep = large]
	f & {\comp{f}{\id{\monadob{m}}}} & {\comp{f}{\monadendo{m}}}
	\arrow["{\rinvunitor{f}}", from=1-1, to=1-2, Rightarrow]
	\arrow["{f \whiskerl \monadunit{m}}", from=1-2, to=1-3, Rightarrow]
	\end{tikzcd}
	\]
	\item for every $f : w \onecell \monadob{m}$, the multiplication is defined to be the following composition.
	\[
	\begin{tikzcd}[column sep = large]
		{\comp{(\comp{f}{\monadendo{m}})}{\monadendo{m}}} & {\comp{f}{(\comp{\monadendo{m}}{\monadendo{m}})}} & {\comp{f}{\monadendo{m}}}
		\arrow["{\lassociator{f}{\monadendo{m}}{\monadendo{m}}}", from=1-1, to=1-2, Rightarrow]
		\arrow["{f \whiskerl \monadmult{m}}", from=1-2, to=1-3, Rightarrow]
	\end{tikzcd}
	\qedhere
	\]
\end{itemize}
\end{construction}

Next we show that pseudofunctors preserve monads.

\begin{problem}
\label{prob:psfunctor-mnd}
Given bicategories $\B_1$ and $\B_2$, a pseudofunctor $F : \B_1 \onecell \B_2$, and a monad $m : \mnd{\B_1}$, to construct a monad $F(m) : \mnd{\B_2}$.
\end{problem}

\begin{construction}[\coqident{Bicategories.Monads.Examples.PsfunctorOnMonad}{psfunctor_on_mnd}]{prob:psfunctor-mnd}
\label{constr:psfunctor-mnd}
The object of $F(m)$ is $F(\monadob{m})$ while the 1-cell is $F(\monadendo{m}) : F(\monadob{m}) \onecell F(\monadob{m})$.
The unit and multiplication are constructed using the following pasting diagrams respectively.
\[
\begin{tikzcd}[column sep=7em, row sep=huge]
	{F(\monadob{m})} & {F(\monadob{m})}
	\arrow[""{name=0, anchor=center, inner sep=0}, "{\id{F(\monadob{m})}}", bend left=60, from=1-1, to=1-2]
	\arrow[""{name=1, anchor=center, inner sep=0}, "{F(\monadendo{m})}"', bend right=60, from=1-1, to=1-2]
	\arrow[""{name=2, anchor=center, inner sep=0}, "{F(\id{\monadob{m}})}"{description}, from=1-1, to=1-2]
	\arrow["{\identitor{F}{\monadob{m}}}", shorten <=1pt, shorten >=5pt, Rightarrow, from=0, to=2]
	\arrow["{F(\monadunit{m})}", shorten <=5pt, shorten >=1pt, Rightarrow, from=2, to=1]
\end{tikzcd}
\quad \quad
\begin{tikzcd}[column sep=7em, row sep=huge]
	{F(\monadob{m})} & {F(\monadob{m})}
	\arrow[""{name=0, anchor=center, inner sep=0}, "{\comp{F(\monadendo{m})}{F(\monadendo{m})}}", bend left=60, from=1-1, to=1-2]
	\arrow[""{name=1, anchor=center, inner sep=0}, "{F(\monadendo{m})}"', bend right=60, from=1-1, to=1-2]
	\arrow[""{name=2, anchor=center, inner sep=0}, "{F(\comp{\monadendo{m}}{\monadendo{m}})}"{description}, from=1-1, to=1-2]
	\arrow["{\compositor{F}{\monadendo{m}}{\monadendo{m}}}", shorten <=1pt, shorten >=5pt, Rightarrow, from=0, to=2]
	\arrow["{F(\monadmult{m})}", shorten <=5pt, shorten >=1pt, Rightarrow, from=2, to=1]
\end{tikzcd}
\]
For a proof of the monad laws, we refer the reader to the formalization.
\end{construction}

In fact, if we have a pseudofunctor $F : \B_1 \onecell \B_2$, then we obtain a pseudofunctor $\mnd{F} : \mnd{\B_1} \onecell \mnd{\B_2}$.

\begin{problem}
\label{prob:mnd-psfunctor}
Given a pseudofunctor $F : \B_1 \onecell \B_2$,
to construct a pseudofunctor $\mnd{F} : \mnd{\B_1} \onecell \mnd{\B_2}$.
\end{problem}

\begin{construction}[\coqident{Bicategories.Monads.Examples.PsfunctorOnMonad}{lift_mnd_psfunctor}]{prob:mnd-psfunctor}
\label{constr:mnd-psfunctor}
To construct the action of $\mnd{F}$ on 1-cells,
we suppose that we have monads $m_1, m_2 : \mnd{\B_1}$ and a 1-cell $f : m_1 \onecell m_2$.
We define a morphism $\mnd{F}(f) : F(m_1) \onecell F(m_2)$ as follows.
\begin{itemize}
  \item We define $\monadmorobfull{\mnd{F}(f)}$ to be $F(\monadmorob{f})$;
  \item we define $\monadmorendo{\mnd{F}(f)}$ using the following pasting diagram.
    \[
      \begin{tikzcd}[row sep = 50pt]
	& {F(\monadob{m_2})} \\
	{F(\monadob{m_1})} && {F(\monadob{m_2})} \\
	& {F(\monadob{m_1})}
	\arrow["{F(f)}", bend left=30, from=2-1, to=1-2]
	\arrow["{F(\monadendo{m_2})}", bend left=30, from=1-2, to=2-3]
	\arrow["{F(\monadendo{m_1})}"', bend right=30, from=2-1, to=3-2]
	\arrow["{F(f)}"', bend right=30, from=3-2, to=2-3]
	\arrow[""{name=0, anchor=center, inner sep=0}, "{F(f \cdot \monadendo{m_2})}"{description}, bend left=20, from=2-1, to=2-3]
	\arrow[""{name=1, anchor=center, inner sep=0}, "{F(\monadendo{m_1} \cdot f)}"{description}, bend right=20, from=2-1, to=2-3]
	\arrow["{\compositor{F}{f}{\monadendo{m_2}}}"{description}, shorten >=2pt, Rightarrow, from=1-2, to=0]
	\arrow["{\compositor{F}{\monadendo{m_1}}{f}}"{description}, shorten <=2pt, Rightarrow, from=1, to=3-2]
	\arrow["{F(\monadmorendo{f})}"{description}, shorten <=3pt, shorten >=3pt, Rightarrow, from=0, to=1]
      \end{tikzcd}
    \]
\end{itemize}
  
For the action on 2-cells, we assume that we have a 2-cell $\gamma : f \twocell g$.
Then data of the 2-cell $\mnd{F}(\gamma)$ is defined to be $F(\monadcellobfull{\gamma})$.
The identitor and compositor of $\mnd{F}$ are inherited from $F$. 
\end{construction}

Next we show that monads can be composed if we have a distributive law between them.

\begin{exa}[\coqident{Bicategories.Monads.Examples.Composition}{compose_mnd}]
\label{exa:mnd-com}
Suppose that we have a distributive law $\tau$ between monads $m_1$ and $m_2$ (\cref{exa:distrlaw}).
Then we define a monad $\compM{m_1}{m_2}$ as follows.
\begin{itemize}
	\item The object is $\monadob{m_1}$ (which is definitionally equal to $\monadob{m_2}$);
	\item the 1-cell is $\comp{\monadendo{m_1}}{\monadendo{m_2}}$;
	\item the unit is constructed as the following composition of 2-cells;
	\[
	\begin{tikzcd}[sep=huge]
		{\id{\monadob{m_1}}} & {\comp{\id{\monadob{m_1}}}{\id{\monadob{m_2}}}} & {\comp{\monadendo{m_1}}{\id{\monadob{m_2}}}} & {\comp{\monadendo{m_1}}{\monadendo{m_2}}}
		\arrow["{\linvunitor{\monadob{m_1}}}", from=1-1, to=1-2, Rightarrow]
		\arrow["{\monadunit{m_1} \whiskerr \id{\monadob{m_1}}}", from=1-2, to=1-3, Rightarrow]
		\arrow["{\monadendo{m_1} \whiskerl \monadunit{m_2}}", from=1-3, to=1-4, Rightarrow]
	\end{tikzcd}
	\]
	\item the multiplication is the following composition of 2-cells.
	\[
	\begin{tikzcd}[sep=huge]
		{\comp{(\comp{\monadendo{m_1}}{\monadendo{m_2}})}{(\comp{\monadendo{m_1}}{\monadendo{m_2}})}} & {\comp{\monadendo{m_1}}{(\comp{\monadendo{m_2}}{(\comp{\monadendo{m_1}}{\monadendo{m_2}})})}} & {\comp{\monadendo{m_1}}{(\comp{(\comp{\monadendo{m_2}}{\monadendo{m_1}})}{\monadendo{m_2}})}} \\
		{\comp{\monadendo{m_1}}{(\comp{(\comp{\monadendo{m_1}}{\monadendo{m_2}})}{\monadendo{m_2}})}} & {\comp{\monadendo{m_1}}{(\comp{\monadendo{m_1}}{(\comp{\monadendo{m_2}}{\monadendo{m_2}})})}} & {\comp{(\comp{\monadendo{m_1}}{\monadendo{m_1}})}{(\comp{\monadendo{m_2}}{\monadendo{m_2}})}} \\
		{\comp{\monadendo{m_1}}{(\comp{\monadendo{m_2}}{\monadendo{m_2}})}} & & {\comp{\monadendo{m_1}}{\monadendo{m_2}}}
		\arrow["{\lassociator{}{}{}}", from=1-1, to=1-2, Rightarrow]
		\arrow["{\monadendo{m_1} \whiskerl \rassociator{}{}{}}", from=1-2, to=1-3, Rightarrow]
		\arrow["{\monadendo{m_1} \whiskerl (\tau \whiskerr \monadendo{m_2})}"{description}, from=1-3, to=2-1, Rightarrow]
		\arrow["{\monadendo{m_1} \whiskerl \lassociator{}{}{}}"{description}, from=2-1, to=2-2, Rightarrow]
		\arrow["{\rassociator{}{}{}}"{description}, from=2-2, to=2-3, Rightarrow]
		\arrow["{\monadmult{m_1} \whiskerr (\comp{\monadendo{m_2}}{\monadendo{m_2}})}"{description}, from=2-3, to=3-1, Rightarrow]
		\arrow["{\monadendo{m_1} \whiskerl \monadmult{m_2}}"', from=3-1, to=3-3, Rightarrow]		
	\end{tikzcd}
	\]
\end{itemize}
\end{exa}

\section{Eilenberg-Moore Objects}
\label{sec:em-ob}
The second important concept in the formal theory of monads is the notion of \emph{Eilenberg-Moore objects}. 
An important property of monads in category theory is that every monad gives rise to an adjunction.
One can do this in two ways: either via Eilenberg-Moore categories or via Kleisli categories.
In this section, we study Eilenberg-Moore objects, which characterize Eilenberg-Moore categories in bicategorical terms.

Note that the terminology in this section is slightly differently compared to what was used by Street \cite{street1972formal}.
Whereas Street would say that a bicategory admits the construction of algebras, we follow \cite{kelly1989elementary,lack2002formal,power1991characterization} and we say that a bicategory has Eilenberg-Moore objects.
Our notions are also formulated slightly differently, because we use \emph{Eilenberg-Moore cones}.

\begin{defi}[\coqident{Bicategories.Limits.EilenbergMooreObjects}{em_cone}]
\label{def:em-cone}
Let $m$ be a monad in a bicategory $\B$.
An \textbf{Eilenberg-Moore cone} for $m$ consists of an object $e : \B$ together with a 1-cell $\idmonad{e} \onecell m$ in $\mnd{\B}$.
\end{defi}

More concretely, an Eilenberg-Moore cone $e$ for a monad $m$ consists of
\begin{itemize}
	\item An object $\emobfull{e}$ in $\B$;
	\item a 1-cell $\emmor{e} : \emobfull{e} \onecell \monadob{m}$ in $\B$;
	\item a 2-cell $\emcell{e} : \comp{\emmor{e}}{\monadendo{m}} \twocell \comp{\id{\emobfull{e}}}{\, \emmor{e}}$ in $\B$
\end{itemize}
such that the following diagrams commute.
\[
\begin{tikzcd}
	\emmor{e} & {\comp{\emmor{e}}{\id{\monadob{m}}}} &[2em] {\comp{\emmor{e}}{\monadendo{m}}} \\
	&& {\comp{\id{\emobfull{e}}}{\emmor{e}}}
	\arrow["\emcell{e}", from=1-3, to=2-3, Rightarrow]
	\arrow["{\emmor{e} \whiskerl \monadunit{m}}", from=1-2, to=1-3, Rightarrow]
	\arrow["{\rinvunitor{\emmor{e}}}", from=1-1, to=1-2, Rightarrow]
	\arrow["{\linvunitor{\emmor{e}}}"', from=1-1, to=2-3, Rightarrow]
\end{tikzcd}
\]
\[
\begin{tikzcd}[column sep = huge]
	{\comp{\emmor{e}}{(\comp{\monadendo{m}}{\monadendo{m}})}} &&& {\comp{\emmor{e}}{\monadendo{m}}} \\
	{\comp{(\comp{\emmor{e}}{\monadendo{m}})}{\monadendo{m}}} & {\comp{(\comp{\id{\emobfull{e}}}{\emmor{e}})}{\monadendo{m}}} & {\comp{\emmor{e}}{\monadendo{m}}} & {\comp{\id{\emobfull{e}}}{\emmor{e}}}
	\arrow["{\rassociator{\emmor{e}}{\monadendo{m}}{\monadendo{m}}}"', from=1-1, to=2-1, Rightarrow]
	\arrow["{\emcell{e} \whiskerr \monadendo{m}}"', from=2-1, to=2-2, Rightarrow]
	\arrow["{\lunitor{\emmor{e}} \whiskerr \monadendo{m}}"', from=2-2, to=2-3, Rightarrow]
	\arrow["{\emcell{e}}"', from=2-3, to=2-4, Rightarrow]
	\arrow["{\emmor{e} \whiskerl \monadmult{m}}", from=1-1, to=1-4, Rightarrow]
	\arrow["{\emcell{e}}", from=1-4, to=2-4, Rightarrow]
\end{tikzcd}
\]
If no confusion arises, we write $\emob{e}$ instead of $\emobfull{e}$.

To get some intution for Eilenberg-Moore cones, let us show that every monad $m$ on a category $\C$ gives rise to an Eilenberg-Moore cone.

\begin{exa}[\coqident{Bicategories.Limits.Examples.BicatOfUnivCatsLimits}{eilenberg_moore_cat_cone}]
\label{exa:em_cone}
Given a monad $m$ on a category $\C$, we define the Eilenberg-Moore category $\emlim{}{m}$ of $m$ as follows.
\begin{itemize}
  \item The objects of $\emlim{}{m}$ are pairs $(x, f)$ of an object $x : \C$ and a morphism $f : m(x) \rightarrow x$ such that $\monadunit{m}(x) \cdot f = \id{x}$ and $\monadmult{m}(x) \cdot f = m(f) \cdot f$. Such objects are also known as \textbf{algebras for the monad $m$}.
  \item Morphisms from $(x, f)$ to $(y, g)$ are morphisms $h : x \rightarrow y$ such that $f \cdot h = m(h) \cdot g$.
    These are \textbf{morphisms of algebras}.
\end{itemize}
If $\C$ is a univalent, then $\emlim{}{m}$ is univalent as well.

Note that we have a forgetful functor $U : \emlim{}{m} \rightarrow \C$, which sends every algebra $(x, f)$ to $x$.
We also have a natural transformation $\tau : \comp{\emmor{e}}{\monadendo{m}} \twocell \comp{\id{\emobfull{e}}}{\, \emmor{e}}$,
which for every algebra $(x, f)$ is defined to be $f$.
All in all, we get an Eilenberg-Moore cone $e$ such that
\begin{itemize}
  \item $\emobfull{e}$ is defined to be $\emlim{}{m}$;
  \item $\emmor{e}$ is defined to be $U$;
  \item $\emcell{e}$ is defined to be $\tau$.
\end{itemize}
\end{exa}

Next we look at the universal property of Eilenberg-Moore objects.
Since Eilenberg-Moore objects are examples of limits in bicategories \cite{power1991characterization}, there are multiple methods to express their universal property.
A first possibility, which is used by Street, is to use biadjunctions \cite{street1972formal}, and a second option is to write out explicit mapping properties.
Alternatively, one could express the universal property as an adjoint equivalence on the hom-categories.
We consider each of these versions, and we show that they are equivalent.

Street's definition expresses the universal property of Eilenberg-Moore objects as a right biadjoint to the pseudofunctor $\idmonadpsfunctor{\B} : \B \onecell \mnd{\B}$.
This is in essence what we phrase in the next definition.

\begin{problem}
\label{prob:em-cone-functor}
Given an Eilenberg-Moore cone $e$ for $m$ and an object $x$, to construct a functor $\emfunctor{x}{\emob{e}} : \homC{\B}{x}{\emob{e}} \onecell \homC{\mnd{\B}}{\idmonad{x}}{m}$.
\end{problem}

\begin{construction}[\coqident{Bicategories.Limits.EilenbergMooreObjects}{em_hom_functor}]{prob:em-cone-functor}
\label{constr:em-cone-functor}
Suppose that we have a 1-cell $f : x \onecell e$ in $\B$. We construct the monad morphism $\emfunctor{x}{e}(f)$ as follows.
\begin{itemize}
	\item The underlying morphism is $\comp{f}{\emmor{e}}$.
	\item For the 2-cell $\comp{(\comp{f}{\emmor{e}})}{\monadendo{m}} \twocell \comp{\id{x}}{(\comp{f}{\emmor{e}})}$, we take
	\[
	\hspace*{-6pt}
	\begin{tikzcd}
		{\comp{(\comp{f}{\emmor{e}})}{\monadendo{m}}} & {\comp{f}{(\comp{\emmor{e}}{\monadendo{m}})}} & {\comp{f}{(\comp{\id{\emob{e}}}{\emmor{e}})}} & {\comp{f}{\emmor{e}}} & {\comp{\id{x}}{(\comp{f}{\emmor{e}})}.}
		\arrow["{\lassociator{f}{\emmor{e}}{\monadendo{m}}}", from=1-1, to=1-2]
		\arrow["{f \whiskerl \emcell{e}}", from=1-2, to=1-3]
		\arrow["{f \whiskerl \lunitor{\emmor{e}}}", from=1-3, to=1-4]
		\arrow["{\linvunitor{\comp{f}{\emmor{e}}}}", from=1-4, to=1-5]
	\end{tikzcd}
	\]
\end{itemize}
Given a 2-cell $\tau : f \twocell g$, the underlying cell of $\emfunctor{x}{e}(\tau) : \emfunctor{x}{e}(f) \twocell \emfunctor{x}{e}(g)$ is $\tau \whiskerr \emmor{e}$.
\end{construction}

\begin{defi}[\coqident{Bicategories.Limits.EilenbergMooreObjects}{bicat_has_em}]
\label{def:em-obj}
Let $\B$ be a bicategory and let $m$ be a monad in $\B$.
An Eilenberg-Moore cone $e$ for $m$ is called \textbf{universal} if for every object $x$ the functor $\emfunctor{x}{e}$ is an adjoint equivalence of categories.
We say that a bicategory has \textbf{Eilenberg-Moore objects} if for every monad $m$ there is a universal Eilenberg-Moore cone.
\end{defi}

Note that \cref{def:em-obj} gives rise to a right biadjoint to $\idmonadpsfunctor{\B}$ via universal arrows \cite{MR2707772}.
We can verify directly that being universal is a proposition.

\begin{prop}[\coqident{Bicategories.Limits.EilenbergMooreObjects}{isaprop_is_universal_em_cone}]
\label{prop:em-universal-prop}
Let $\B$ be a locally univalent bicategory and let $e$ be an Eilenberg-Moore cone for a monad $m$ in $\B$.
The type that $e$ is universal, is a proposition.
\end{prop}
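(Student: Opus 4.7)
The plan is to unfold the definition of universality and reduce to the fact that being an adjoint equivalence between univalent categories is a proposition. By \cref{def:em-obj}, $e$ being universal is the dependent product over $x : \B$ of the type ``$\emfunctor{x}{e}$ is an adjoint equivalence of categories''. Since a dependent product of propositions is a proposition, it suffices to show that for each $x$, the latter type is a proposition.

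First I would check that both the source and target of $\emfunctor{x}{e}$ are univalent categories. The source $\homC{\B}{x}{\emob{e}}$ is univalent directly by local univalence of $\B$. For the target $\homC{\mnd{\B}}{\idmonad{x}}{m}$, I would invoke that local univalence of $\mnd{\B}$ follows from local univalence of $\B$ in the same modular way as in \cref{prop:mnd-univ}: each of $\dEndo{\B}$, $\dUnit{\B}$, $\dMult{\B}$ and $\dfullsub{\isMnd}$ is locally univalent (the last three automatically, since they arise via \cref{def:dispcat-to-dispbicat} and so have unit-type displayed 2-cells, while $\dEndo{\B}$ is locally univalent when $\B$ is), and local univalence is preserved by the product, sigma and full-sub constructions of \cref{def:disp-operations}. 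Hence $\mnd{\B}$ is locally univalent and its hom-categories are univalent.

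With both categories univalent, I would then apply the standard univalent category-theoretic fact that for a functor $F : \C \onecell \D$ between univalent categories, the type ``$F$ is an adjoint equivalence'' is a proposition. The reason is that the right adjoint, unit and counit of an adjoint equivalence are determined by $F$ up to unique isomorphism, and in a univalent category these isomorphisms become equalities; combined with the fact that the triangle identities and invertibility conditions are themselves propositions, this forces the whole structure to be contractible once such a structure exists. Applying this to $\emfunctor{x}{e}$ finishes the argument.

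The only non-routine step is local univalence of $\mnd{\B}$; once that is in hand, the remainder is a direct composition of a dependent product of propositions with the standard propositionality of adjoint-equivalence structures on functors between univalent categories. No bespoke reasoning about the specific shape of $\emfunctor{x}{e}$ is needed.
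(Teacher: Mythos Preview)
Your proposal is correct and follows the natural argument: universality is a dependent product over $x$ of ``$\emfunctor{x}{e}$ is an adjoint equivalence'', and the latter is a proposition once the hom-categories involved are univalent, which follows from local univalence of $\B$ and of $\mnd{\B}$. The paper does not spell out a proof for this proposition (it only gives the formalization reference), but your reasoning matches the intended argument; note that strictly speaking univalence of either the source or the target hom-category already suffices for ``being an adjoint equivalence'' to be a proposition, so your check of both is more than needed but certainly not wrong.
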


\begin{proof}
Since $\B$ is locally univalent,
both the domain and codomain of $\emfunctor{x}{e}$ are univalent.
From this, we get that the type that $\emfunctor{x}{e}$ is an adjoint equivalence, is a proposition \cite[Lemma 6.8]{rezk_completion}.
\end{proof}

Another way to formulate the universality of an Eilenberg-Moore cone, is by using Eilenberg-Moore categories.
Here we simplify the formulation of universality: instead of using hom-categories in the bicategory of monads,
we use Eilenberg-Moore categories.

\begin{problem}
\label{prob:em-functor}
Given an Eilenberg-Moore cone $e$ for a monad $m$ and an object $x : \B$, to construct a functor $\emfunctoralt{x}{e} : \homC{\B}{x}{\emob{e}} \onecell \emlim{}{\hommonad{x}{m}}$.
\end{problem}

\begin{construction}[\coqident{Bicategories.Limits.EilenbergMooreObjects}{is_em_universal_em_cone_functor}]{prob:em-functor}
\label{constr:em-functor}
First, we say how $\emfunctoralt{x}{e}$ acts on objects.
Suppose that we have $f : x \onecell \emob{e}$.
To define an object of $\emlim{}{\hommonad{x}{m}}$, we first need to give a 1-cell $h : x \onecell \monadob{x}$.
We define $h$ as
$
\begin{tikzcd}
	x & {\emob{e}} & {\monadob{m}}
	\arrow["f", from=1-1, to=1-2]
	\arrow["{\emmor{e}}", from=1-2, to=1-3]
\end{tikzcd}
$
We also need to define a 2-cell $\tau : \comp{h}{\monadendo{m}} \Rightarrow h$ for which we take
\[
\begin{tikzcd}[column sep = large]
	{\comp{(\comp{f}{\emmor{e}})}{\monadendo{m}}} & {\comp{f}{(\comp{\emmor{e}}{\monadendo{m}})}} & {\comp{f}{(\comp{\id{x}}{\emmor{e}})}} & {\comp{f}{\emmor{e}}}
	\arrow["{\lassociator{f}{\emmor{e}}{\monadendo{m}}}", from=1-1, to=1-2, Rightarrow]
	\arrow["{f \whiskerl \emcell{e}}", from=1-2, to=1-3, Rightarrow]
	\arrow["{f \whiskerl \lunitor{\emmor{e}}}", from=1-3, to=1-4, Rightarrow]
\end{tikzcd}
\]

Next we define the action on morphisms.
Suppose that we have $f, g : x \onecell \emob{e}$ and a 2-cell $\theta : f \twocell g$.
We need to construct a 2-cell $\comp{f}{\emmor{e}} \twocell \comp{g}{\emmor{e}}$, for which we take $\theta \whiskerr \emmor{e}$.
\end{construction}

\begin{prop}[\coqident{Bicategories.Limits.EilenbergMooreObjects}{is_universal_em_cone_weq_is_em_universal_em_cone}]
\label{prop:em-ump}
Let $e$ be an Eilenberg-Moore cone in a locally univalent bicategory.
Then $e$ is universal if and only if for every $x$ the functor $\emfunctoralt{x}{e}$ is an adjoint equivalence.
\end{prop}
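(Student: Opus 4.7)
The plan is to exhibit, for each object $x : \B$, an isomorphism (in fact an adjoint equivalence) of categories
\[
\Phi_x : \homC{\mnd{\B}}{\idmonad{x}}{m} \longrightarrow \emlim{}{\hommonad{x}{m}}
\]
such that $\Phi_x \circ \emfunctor{x}{e}$ is naturally isomorphic (indeed essentially equal) to $\emfunctoralt{x}{e}$. Once this is done, the statement follows immediately: for each $x$, the functor $\emfunctor{x}{e}$ is an adjoint equivalence if and only if its composite with the equivalence $\Phi_x$ is, which in turn holds if and only if $\emfunctoralt{x}{e}$ is an adjoint equivalence. Quantifying over $x$ gives the desired biconditional.

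To construct $\Phi_x$, observe that a monad morphism $h : \idmonad{x} \onecell m$ consists of a 1-cell $\monadmorob{h} : x \onecell \monadob{m}$ together with a 2-cell $\monadmorendo{h} : \comp{\monadmorob{h}}{\monadendo{m}} \twocell \comp{\id{x}}{\monadmorob{h}}$ satisfying the two monad morphism laws, while an object of $\emlim{}{\hommonad{x}{m}}$ consists of a 1-cell $h : x \onecell \monadob{m}$ together with a 2-cell $\tau : \comp{h}{\monadendo{m}} \twocell h$ satisfying the two algebra laws. The functor $\Phi_x$ sends $(\monadmorob{h}, \monadmorendo{h})$ to $(\monadmorob{h}, \comp{\monadmorendo{h}}{\lunitor{\monadmorob{h}}})$ and acts as the identity on 2-cells; its inverse precomposes $\tau$ with $\linvunitor{h}$. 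The compatibility of laws reduces to routine unitor coherence: the monad morphism unit law (using $\monadunit{\idmonad{x}} = \id{\id{x}}$) translates into the unit law for $\hommonad{x}{m}$-algebras, and likewise for the multiplication law, where one uses that the unit and multiplication of $\hommonad{x}{m}$ are defined in \cref{constr:hom-mnd} by whiskering $\monadunit{m}$ and $\monadmult{m}$ with unitors and associators.

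To verify $\emfunctoralt{x}{e} \cong \Phi_x \circ \emfunctor{x}{e}$, note that both functors send $f : x \onecell \emob{e}$ to the object with underlying 1-cell $\comp{f}{\emmor{e}}$. Comparing the 2-cell parts from \cref{constr:em-cone-functor} and \cref{constr:em-functor}, the only difference is that $\emfunctor{x}{e}$ ends with an additional $\linvunitor{\comp{f}{\emmor{e}}}$, which is precisely cancelled when $\Phi_x$ postcomposes with $\lunitor{\comp{f}{\emmor{e}}}$. Thus the two composites agree up to bicategorical coherence, which is enough to obtain a natural isomorphism (and in suitable presentations, an equality) of functors. On 2-cells both functors act by right-whiskering with $\emmor{e}$, so naturality is immediate.

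The main obstacle is bookkeeping rather than mathematics: one has to check that under the translation $\monadmorendo{h} \mapsto \comp{\monadmorendo{h}}{\lunitor{\monadmorob{h}}}$, the two monad morphism laws from \cref{def:unit} and \cref{def:mult} coincide, modulo unitor and associator manipulation, with the unit and associativity laws for algebras of $\hommonad{x}{m}$. I expect this to be straightforward in the formalization because both the monad morphism laws and the algebra laws, after unpacking, consist of the same composites of whiskered unit and multiplication 2-cells reshuffled by unitors and associators of $\B$. Once $\Phi_x$ is in hand and recognized as an adjoint equivalence, the biconditional is formal.
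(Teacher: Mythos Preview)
Your proposal is correct and follows essentially the same approach as the paper: the paper's proof sketch likewise constructs an adjoint equivalence between $\emlim{}{\hommonad{x}{m}}$ and $\homC{\mnd{\B}}{\idmonad{x}}{m}$ and then invokes the 2-out-of-3 property, and your more detailed account of $\Phi_x$ and the compatibility with \cref{constr:em-cone-functor} and \cref{constr:em-functor} fills in exactly the bookkeeping the paper omits.
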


\begin{proof}
We only give a brief sketch of the proof.
The main idea is that one can construct an adjoint equivalence $F$ from $\emlim{}{\hommonad{x}{m}}$ to $\homC{\mnd{\B}}{\idmonad{x}}{m}$
as depicted in the following diagram.
\[
  \begin{tikzcd}[column sep = huge]
    {\homC{\B}{x}{\emob{e}}} & {\emlim{}{\hommonad{x}{m}}} \\
    & {\homC{\mnd{\B}}{\idmonad{x}}{m}}
    \arrow["{\emfunctoralt{x}{e}}"', from=1-1, to=2-2]
    \arrow["{\emfunctor{x}{e}}", from=1-1, to=1-2]
    \arrow["F", from=1-2, to=2-2]
  \end{tikzcd}
\]
Note that this diagram commutes up to natural isomorphism.
By the 2-out-of-3 property for adjoint equivalences,
we get that $\emfunctoralt{x}{e}$ is an adjoint equivalence
if and only if $\emfunctor{x}{e}$ is.
From that the desired statement follows.
\end{proof}

Finally, universality can be formulated by stating concrete mapping properties,
which tell use how to construct 1-cells to Eilenberg-Moore objects.
These properties are deduced from the fact that every adjoint equivalence is split essentially surjective and fully faithful.
More precisely, an Eilenberg-Moore cone $e$ is universal if and only if
\begin{enumerate}
  \item for every Eilenberg-Moore cone $q$ there is a 1-cell $\emumpmor{q} : \emob{q} \onecell \emob{e}$ and an invertible 2-cell $\emumpcom{q} : \comp{\idmonad{\emumpmor{q}}}{\emmor{e}} \Rightarrow \emmor{q}$ in $\mnd{\B}$;
  \item for all 1-cells $g_1, g_2 : \emob{q} \onecell \emob{e}$ and 2-cells $\tau : \comp{\idmonad{g_1}}{\emmor{e}} \Rightarrow \comp{\idmonad{g_2}}{\emmor{e}}$ in $\mnd{\B$},
	there is a unique 2-cell $\emumpcell{\tau}$ such that $\idmonad{\emumpcell{\tau}} \whiskerr \emmor{e} = \tau$.
\end{enumerate}
Note that the type that expresses that an Eilenberg-Moore cone satisfies these universal properties, is a proposition.

To see why we have such an equivalence, we first observe that a functor is an adjoint equivalence if and only if it is split essentially and fully faithful.
The first of these two requirements corresponds to the essential surjectivity of $\emfunctor{x}{e}$, and the second corresponds to fully faithfulness.

Let us finish this section with some examples of Eilenberg-Moore objects.

\begin{exa}[\coqident{Bicategories.Limits.Examples.BicatOfUnivCatsLimits}{has_em_bicat_of_univ_cats}]
\label{ex:em-cats}
The bicategory $\UnivCat$ has Eilenberg-Moore objects.
We showed in \cref{exa:em_cone} how every monad $m$ gives rise to an Eilenberg-Moore cone via the Eilenberg-Moore category $\emlim{}{m}$,
so we only need to verify that this cone is universal.
For that we use the concrete universal properties, and we only prove the universal property for 1-cells.

Suppose that we have a category $\C'$, a functor $F : \C' \rightarrow C$, and a natural transformation $\tau : F \cdot m \twocell F$.
We also assume that for every $x : \C'$ the following equalities hold
\begin{equation}
\label{eq:monad_unit}
\monadunit{m}(F(x)) \cdot \tau(x) = \id{F(x)},
\end{equation}
\begin{equation}
\label{eq:monad_mult}
m(\tau(x)) \cdot \tau(x) = \monadmult{m}(F(x)) \cdot \tau(x).
\end{equation}
Then we have a functor $\overline{F} : C' \rightarrow \emlim{}{m}$.
This functor sends every object $x : C'$ to the algebra $(F(x') , \tau(x))$,
which is indeed an algebra by our assumptions (\cref{eq:monad_unit,eq:monad_mult}).
The functor $\overline{f}$ sends every morphism $f : x \rightarrow x'$ to $F(f)$,
which is a morphism of algebras by the naturality of $\tau$.
\end{exa}

\begin{exa}[\coqident{Bicategories.Limits.Examples.LimitsStructuredCategories}{has_em_univ_cat_with_terminal_obj}]
\label{ex:em-terminal}
The bicategory $\Terminal$ has Eilenberg-Moore objects as well.
If we have a monad $m : \mnd{\Terminal}$, then $\monadob{m}$ has a terminal object and $\monadendo{m}$ preserves terminal objects.
Under these conditions, it follows that the Eilenberg-Moore category of $m$ has a terminal object.
From this, we can conclude that $\Terminal$ indeed has Eilenberg-Moore objects.
\end{exa}

For the next example, we write $\EnrCat{\V}$ for the bicategory of univalent categories enriched over a univalent monoidal category $\V$ \cite[Definition 2.5]{enriched}.

\begin{exa}[\coqident{Bicategories.Limits.Examples.BicatOfEnrichedCatsLimits}{has_em_bicat_of_enriched_cats}]
\label{exa:em-enriched}
Let $\V$ be a monoidal category with equalizers
and let $m$ be a monad in $\EnrCat{\V}$.
Then the Eilenberg-Moore category of $m$ is enriched in $\V$.
From this, we get that $\EnrCat{\V}$ has Eilenberg-Moore objects.
\end{exa}

One can also show that $\cobicat{\SymMonCat}$ has Eilenberg-Moore objects.
These are given by Eilenberg-Moore categories of comonads.

\section{Duality and Kleisli Objects}
\label{sec:duality}
The goal of this section is to construct Eilenberg-Moore objects in $\opbicat{\UnivCat}$.
To do so, we start by characterizing such objects via \emph{Kleisli objects}.
The definitions that we use in this section, are dualized compared to \cref{sec:em-ob}.

\begin{defi}[\coqident{Bicategories.Colimits.KleisliObjects}{kleisli_cocone}]
Let $\B$ be a bicategory and let $m$ be a monad in $\B$.
A \textbf{Kleisli cocone} $k$ for $m$ in $\B$ consists of an object $\klob{k} : \B$, a 1-cell $\klmor{k} : \monadob{m} \onecell \klob{k}$ in $\B$, and a 2-cell $\klcell{k} : \comp{\monadendo{m}}{\klmor{k}} \twocell \klmor{k}$ in $\B$ such that the following diagrams commute.
\[
\begin{tikzcd}[column sep = huge]
	{\comp{\id{\monadob{m}}}{\, \klmor{k}}} & {\comp{\monadendo{m}}{\klmor{k}}} \\
	& {\klmor{k}}
	\arrow["{\monadunit{m} \whiskerr \klmor{k}}", from=1-1, to=1-2, Rightarrow]
	\arrow["{\klcell{k}}", from=1-2, to=2-2, Rightarrow]
	\arrow["{\lunitor{\klmor{k}}}"', from=1-1, to=2-2, Rightarrow]
\end{tikzcd}
\]
\[
\begin{tikzcd}[column sep = large]
	{\comp{(\comp{\monadendo{m}}{\monadendo{m}})}{\klmor{k}}} & {\comp{\monadendo{m}}{(\comp{\monadendo{m}}{\klmor{k}})}} & {\comp{\monadendo{m}}{\klmor{k}}} \\
	{\comp{\monadendo{m}}{\klmor{k}}} && {\klmor{k}}
	\arrow["{\monadmult{m} \whiskerr \klmor{k}}"', from=1-1, to=2-1, Rightarrow]
	\arrow["{\lassociator{\monadendo{m}}{\monadendo{m}}{\klmor{k}}}", from=1-1, to=1-2, Rightarrow]
	\arrow["{\monadendo{m} \whiskerl \klcell{k}}", from=1-2, to=1-3, Rightarrow]
	\arrow["{\klcell{k}}", from=1-3, to=2-3, Rightarrow]
	\arrow["{\klcell{k}}"', from=2-1, to=2-3, Rightarrow]
\end{tikzcd}
\]
\end{defi}

\begin{defi}[\coqident{Bicategories.Colimits.KleisliObjects}{has_kleisli_ump}]
A Kleisli cocone $k$ is said to be \textbf{universal} if the following conditions are satisfied.
\begin{itemize}
	\item For every Kleisli cocone $q$ there is a 1-cell $\klumpmor{q} : \klob{k} \onecell \klob{q}$ and an invertible 2-cell $\klumpcom{q} : \comp{\klmor{k}}{\klumpmor{q}} \twocell \klmor{q}$ such that the following diagram commutes.
	\[
	\begin{tikzcd}[column sep = huge]
		{\comp{\monadendo{m}}{(\comp{\klmor{k}}{\klumpmor{q}})}} && {\comp{\monadendo{m}}{\klmor{q}}} \\
		{\comp{(\comp{\monadendo{m}}{\klmor{k}})}{\klumpmor{q}}} & {\comp{\klmor{k}}{\klumpmor{q}}} & {\klmor{q}}
		\arrow["{\monadendo{m} \whiskerl \klumpcom{q}}", from=1-1, to=1-3, Rightarrow]
		\arrow["{\rassociator{\monadendo{m}}{\klmor{k}}{\klumpmor{q}}}"', from=1-1, to=2-1, Rightarrow]
		\arrow["{\klcell{k} \whiskerr \klumpmor{q}}"', from=2-1, to=2-2, Rightarrow]
		\arrow["{\klumpcom{q}}"', from=2-2, to=2-3, Rightarrow]
		\arrow["{\klcell{q}}", from=1-3, to=2-3, Rightarrow]
	\end{tikzcd}
	\]
	\item Suppose that we have an object $x : \B$, two 1-cells $g_1, g_2 : \klob{k} \onecell x$, and a 2-cell $\tau : \comp{\klmor{k}}{g_1} \twocell \comp{\klmor{k}}{g_2}$ such that the following diagram commutes.
	\[
	\begin{tikzcd}[column sep = huge]
		{\comp{\monadendo{m}}{(\comp{\klmor{k}}{g_1})}} & {\comp{(\comp{\monadendo{m}}{\klmor{k}})}{g_1}} & {\comp{\klmor{k}}{g_1}} \\
		{\comp{\monadendo{m}}{(\comp{\klmor{k}}{g_2})}} & {\comp{(\comp{\monadendo{m}}{\klmor{k}})}{g_2}} & {\comp{\klmor{k}}{g_2}}
		\arrow["{\rassociator{\monadendo{m}}{\klmor{k}}{g_1}}", from=1-1, to=1-2, Rightarrow]
		\arrow["{\klcell{k} \whiskerr g_1}", from=1-2, to=1-3, Rightarrow]
		\arrow["\tau", from=1-3, to=2-3, Rightarrow]
		\arrow["{\monadendo{m} \whiskerl \tau}"', from=1-1, to=2-1, Rightarrow]
		\arrow["{\rassociator{\monadendo{m}}{\klmor{k}}{g_2}}"', from=2-1, to=2-2, Rightarrow]
		\arrow["{\klcell{k} \whiskerr g_2}"', from=2-2, to=2-3, Rightarrow]
	\end{tikzcd}
	\]
	Then there is a unique 2-cell $\klumpcell{\tau} : g_1 \twocell g_2$ such that $\klcell{k} \whiskerl \klumpcell{\tau} = \tau$.
\end{itemize}
A \textbf{Kleisli object} is a universal Kleisli cocone.
We say that a bicategory has Kleisli objects if there is a Kleisli object for every monad $m$.
\end{defi}

Let us establish some basic facts about Kleisli objects.

\begin{prop}[\coqident{Bicategories.Colimits.KleisliObjects}{isaprop_has_kleisli_ump}]
\label{prop:isaprop_is_kleisli_object}
In a locally univalent bicategory $\B$ the type that a Kleisli cocone is universal,
is a proposition.
\end{prop}

\begin{prop}[\coqident{Bicategories.Limits.Examples.OpMorBicatLimits}{op1_has_em}]
If $\B$ has Kleisli objects, then $\opbicat{\B}$ has Eilenberg-Moore objects.
\end{prop}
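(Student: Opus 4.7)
The plan is to exhibit a direct correspondence between Eilenberg-Moore cones for a monad $m$ in $\opbicat{\B}$ and Kleisli cocones for $m$ in $\B$, and then show that this correspondence preserves universality. By \cref{exa:mnd-dual}, monads in $\opbicat{\B}$ coincide with monads in $\B$, so any monad $m$ in $\opbicat{\B}$ determines a monad in $\B$; by hypothesis it admits a universal Kleisli cocone $k$. I will transport $k$ into an Eilenberg-Moore cone in $\opbicat{\B}$.

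First I would unfold the data. A 1-cell $\emmor{e} : \emob{e} \onecell \monadob{m}$ in $\opbicat{\B}$ is a 1-cell $\monadob{m} \onecell \emob{e}$ in $\B$, and the 2-cell $\emcell{e} : \comp{\emmor{e}}{\monadendo{m}} \twocell \comp{\id{\emob{e}}}{\emmor{e}}$ in $\opbicat{\B}$ corresponds, after reversing composition, to a 2-cell $\comp{\monadendo{m}}{\emmor{e}} \twocell \comp{\emmor{e}}{\id{\emob{e}}}$ in $\B$, which is the same as $\comp{\monadendo{m}}{\klmor{k}} \twocell \klmor{k}$ up to the right unitor. I would therefore set $\emob{e} \defeq \klob{k}$, $\emmor{e} \defeq \klmor{k}$, and take $\emcell{e}$ to be $\klcell{k}$ composed with the appropriate unitors on either side. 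A diagram chase then reduces the two EM cone coherences in $\opbicat{\B}$ to the two Kleisli cocone coherences in $\B$, once the inserted unitors have been cancelled using the triangle and pentagon laws.

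For universality I would use the explicit mapping-property characterisation stated after \cref{prop:em-ump}. Given another EM cone $q$ in $\opbicat{\B}$, apply the same transformation to view it as a Kleisli cocone $q'$ in $\B$. The universal property of $k$ furnishes a 1-cell $\klumpmor{q'} : \klob{k} \onecell \klob{q'}$ together with an invertible 2-cell $\klumpcom{q'} : \comp{\klmor{k}}{\klumpmor{q'}} \twocell \klmor{q'}$ compatible with $\klcell{k}$ and $\klcell{q'}$. Reinterpreted in $\opbicat{\B}$, this is a 1-cell $\emob{q} \onecell \emob{e}$ together with an invertible 2-cell $\comp{\idmonad{\emumpmor{q}}}{\emmor{e}} \twocell \emmor{q}$, and the compatibility square becomes exactly the condition making the mediator a morphism in $\mnd{\opbicat{\B}}$. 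The 2-cell part of the universal property transfers without change because 2-cells in $\opbicat{\B}$ and $\B$ coincide.

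The main obstacle is the bookkeeping at the boundary between $\B$ and $\opbicat{\B}$: verifying that the particular square arising in the Kleisli universal property becomes, after reversing 1-cell composition and introducing the required unitors, precisely the displayed-monad-morphism condition in $\mnd{\opbicat{\B}}$. These are routine coherence manipulations but error-prone, since every occurrence of $\lassociator{-}{-}{-}$, $\lunitor{-}$ and $\runitor{-}$ on one side must be matched against its reversed counterpart on the other. Once this dictionary is fixed, both existence and uniqueness portions of universality follow immediately from the universal property of $k$.
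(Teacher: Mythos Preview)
Your proposal is correct and follows the only natural approach; the paper gives no textual proof for this proposition, and your unwinding of the $\opbicat{\B}/\B$ dictionary is precisely what the formalization carries out. One minor refinement: in the 2-cell clause the transfer is not quite ``without change''---the hypothesis that $\tau$ is a monad 2-cell in $\mnd{\opbicat{\B}}$ encodes exactly the commuting square required as a hypothesis in the Kleisli 2-cell universal property, so you need the same dictionary there as in the 1-cell clause.
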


As such, to find Eilenberg-Moore objects in $\opbicat{\UnivCat}$, we need to find Kleisli objects in $\UnivCat$.
However, before we look at those, we look at Kleisli objects in $\Cat$.
These are constructed via the usual definition of \emph{Kleisli categories}.

\begin{problem}
\label{prob:kleisli-cats}
To construct Kleisli objects in $\Cat$.
\end{problem}

\begin{construction}[\coqident{Bicategories.Colimits.Examples.BicatOfCatsColimits}{bicat_of_cats_has_kleisli}]{prob:kleisli-cats}
Recall that given a monad $m$ on a category $\C$, the Kleisli category $\kleisli{m}$ is defined to be the category whose objects are $x : \C$ and whose morphisms from $x : \C$ to $y : \C$ are morphisms  $x \onecell m(y)$.
Note that we have a functor $F : \C \onecell \kleisli{m}$: it sends objects $x$ to $x$ and morphisms $f : x \onecell y$ to 
$\begin{tikzcd}
	x & y & {\monadendo{m}(y).}
	\arrow["f", from=1-1, to=1-2]
	\arrow["{\monadunit{m}(y)}", from=1-2, to=1-3]
\end{tikzcd}$
We also have a natural transformation $\comp{\monadendo{m}}{F} \twocell F$, which is the identity on every object $x$.
As such, we have a Kleisli cocone.
This cocone is universal, and for a proof we refer the reader to the formalization.
\end{construction}

Note that even if $\C$ is required to be univalent, the Kleisli category $\kleisli{m}$ is \emph{not} necessarily univalent.
As such, to obtain Kleisli objects in $\UnivCat$, we need to use an alternative definition for the Kleisli category \cite{univalence-principle}.
First, we define a functor $\freealg{m} : \C \onecell \eilenbergmoore{m}$ which sends objects $x : \C$ to the algebra $\monadmult{m}(x) : \monadendo{m}(\monadendo{m}(x)) \onecell \monadendo{m}(x)$ and morphisms $f : x \onecell y$ to $\monadendo{m}(f) : \monadendo{m}(x) \onecell \monadendo{m}(y)$.
Note that this 1-cell can actually be defined in arbitrary bicategories (see \cref{sec:adj}).
By taking the full image of this functor, we obtain the category $\univkleisli{m}$.
\begin{itemize}
	\item Objects of $\univkleisli{m}$ are pairs $y : \C$ together with a proof of
	$
	\trunc{\sigmatype{x}{\C}{m(x) \cong y}}.
	$
	\item Morphisms from $y_1 : \univkleisli{m}$ to $y_2 : \univkleisli{m}$ are morphisms $y_1 \onecell y_2$ in $\C$.
\end{itemize}
If $\C$ is univalent, then $\emlim{\UnivCat}{m}$ is univalent, and thus $\univkleisli{m}$ is so as well.

\begin{problem}
\label{prob:kleisi-functor}
To construct a fully faithful essentially surjective functor $\kleislifunctor{m} : \kleisli{m} \onecell \univkleisli{m}$.
\end{problem}

\begin{construction}[\coqident{CategoryTheory.categories.KleisliCategory}{functor_to_kleisli_cat}]{prob:kleisi-functor}
\label{constr:kleisi-functor}
The functor $\kleislifunctor{m}$ sends every object $x : \kleisli{m}$ to $\freealg{m}(x)$, which is indeed in the image of $\freealg{m}$.
Morphisms $f : x \onecell \monadendo{m}(y)$ are sent to 
$\begin{tikzcd}
	{\monadendo{m}(x)} & {\monadendo{m}(\monadendo{m}(y))} & {\monadendo{m}(y).}
	\arrow["{\monadendo{m}(f)}", from=1-1, to=1-2]
	\arrow["{\monadmult{m}(y)}", from=1-2, to=1-3]
\end{tikzcd}$
A proof that this functor is both essentially surjective and fully faithful can be found in the formalization.
\end{construction}

In univalent foundations, not every functor that is both fully faithful and essentially surjective is automatically an adjoint equivalence as well.
This statement only holds if the domain is univalent.
For this reason, the categories $\kleisli{m}$ and $\univkleisli{m}$ are not necessarily equivalent.
However, we can still use the functor $\kleislifunctor{m}$ to deduce the universal property of $\univkleisli{m}$.
For that, we use Theorem 8.4 in \cite{rezk_completion}.

\begin{prop}[\coqident{CategoryTheory.PrecompEquivalence}{precomp_adjoint_equivalence}]
\label{prop:precomp-equiv}
Let $F : \C_1 \onecell \C_2$ be a fully faithful and essentially surjective functor, and suppose that $\C_3$ is a univalent category.
Then the functor $\precomp{F}{\C_3} : \C_3^{\C_2} \onecell \C_3^{\C_1}$, given by precomposition with $F$, is an adjoint equivalence.
\end{prop}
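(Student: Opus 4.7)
The plan is to reduce \cref{prop:precomp-equiv} to showing that $\precomp{F}{\C_3}$ is fully faithful and essentially surjective, and then invoke that, between univalent categories, any such functor is already an adjoint equivalence. First I would observe that since $\C_3$ is assumed univalent, so are the functor categories $\C_3^{\C_2}$ and $\C_3^{\C_1}$; this is a standard fact about univalent categories, namely that exponentials with univalent codomain are univalent. Hence it suffices to prove that $\precomp{F}{\C_3}$ is weakly equivalent in the sense of being fully faithful and essentially surjective, and then to apply the criterion that on univalent categories such a functor is automatically an adjoint equivalence.

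For fully faithfulness, I would check that a natural transformation $\alpha : H_1 \circ F \Rightarrow H_2 \circ F$ corresponds bijectively to a natural transformation $\tilde{\alpha} : H_1 \Rightarrow H_2$. The key step is that for every $y : \C_2$, essential surjectivity of $F$ yields $\trunc{\sigmatype{x}{\C_1}{F\> x \cong y}}$, and one uses fully faithfulness of $F$ to show that the candidate component $\tilde{\alpha}_y$, defined by transporting $\alpha_x$ along the chosen isomorphism $F\> x \cong y$, does not depend on the chosen preimage. This gives a well-defined and necessarily unique extension; the arguments here are the usual ones behind the universal property of the Rezk completion.

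For essential surjectivity of $\precomp{F}{\C_3}$, given a functor $G : \C_1 \onecell \C_3$ I would extend it to $\tilde{G} : \C_2 \onecell \C_3$ together with an isomorphism $\tilde{G} \circ F \cong G$. On objects, for each $y : \C_2$ I pick (using essential surjectivity of $F$ and the fact that $\C_3$ is univalent, so that $G\> x$ only depends on $x$ up to equality once transported across an iso) a value $\tilde{G}\> y : \C_3$; on morphisms I use fully faithfulness of $F$ to transport morphisms of $\C_3$ between the chosen preimages. This is precisely the content of Theorem~8.4 of \cite{rezk_completion}, which I would invoke directly rather than reproducing the construction.

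The main obstacle is the essential surjectivity step, because defining $\tilde{G}$ on objects requires eliminating a propositional truncation into the type of objects of $\C_3$, which is a priori not a proposition. This is exactly where the univalence of $\C_3$ enters in an essential way: it ensures that the type of ``values of $\tilde{G}\> y$ compatible with $G$ along some preimage'' is a proposition, so the truncation elimination is legitimate. Once that subtle point is handled, the remaining verifications (functoriality of $\tilde{G}$, the isomorphism $\tilde{G} \circ F \cong G$, and naturality) are routine. All of this already appears in the literature on the Rezk completion, so in the formalization I would reuse those results rather than reproving them.
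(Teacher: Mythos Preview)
Your proposal is correct and aligns with the paper's treatment: the paper does not give a proof in the text but simply cites Theorem~8.4 of \cite{rezk_completion}, which is exactly the result you invoke and whose content you outline. Your additional detail (univalence of the functor categories, the truncation-elimination subtlety in building the extension $\tilde{G}$) is accurate and merely unpacks what that theorem says, so there is no substantive difference in approach.
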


\begin{problem}
\label{prob:univ-kleisli-cats}
To construct Kleisli objects in $\UnivCat$.
\end{problem}

\begin{construction}[\coqident{Bicategories.Colimits.Examples.BicatOfUnivCatsColimits}{bicat_of_univ_cats_has_kleisli}]{prob:univ-kleisli-cats}
\label{constr:univ-kleisli-cats}
We only show how to construct the required 1-cells.
Suppose that we have a Kleisli cocone $q$ in $\UnivCat$.
Note that $q$ also is a Kleisli cocone in $\Cat$, and as such, we get a functor $\klumpmor{q} : \kleisli{m} \onecell \klob{q}$.
By \cref{prop:precomp-equiv}, we now get the desired functor $\univkleisli{m} \onecell \klob{q}$.
\end{construction}

Note that similar techniques can be used to construct Kleisli objects of enriched categories \cite{enriched}.
More specifically, for enriched categories one can prove a universal property for the Rezk completion analogous to \cref{prop:precomp-equiv}.
In addition, one equip both $\kleisli{m}$ and $\univkleisli{m}$ with an enrichment if $m$ is an enriched monad,
and one can show that $\kleislifunctor{m}$ is an enriched functor.
One can then follow \cref{constr:univ-kleisli-cats} to construct Kleisli objects in the bicategory of enriched categories.

\section{Monads and Adjunctions}
\label{sec:adj}
The cornerstone of the theory of monads is the relation between monads and adjunctions.
More specifically, every adjunction gives rise to a monad and vice versa.
This was generalized by Street to 2-categories that have Eilenberg-Moore objects \cite{street1972formal}.
In this section, we prove these theorems, and to do so, we start by recalling adjunctions in bicategories.

\begin{defi}[\coqident{Bicategories.Morphisms.Adjunctions}{adjunction}]
\label{def:adjunction}
An \textbf{adjunction} $\adjunction{l}{r}{\eta}{\varepsilon}$ in a bicategory $\B$ consists of
\begin{itemize}
	\item objects $x$ and $y$;
	\item 1-cells $l : x \onecell y$ and $r : y \onecell x$;
	\item 2-cells $\eta : \id{x} \twocell \comp{l}{r}$ and $\varepsilon : \comp{r}{l} \twocell \id{y}$
\end{itemize}
such that the following 2-cells are identities
\[
\begin{tikzcd}
	l & {\comp{\id{x}}{l}} & {\comp{(\comp{l}{r})}{l}} & {\comp{l}{(\comp{r}{l})}} & {\comp{l}{\id{y}}} & l
	\arrow["{\linvunitor{l}}", from=1-1, to=1-2, Rightarrow]
	\arrow["{\eta \whiskerr l}", from=1-2, to=1-3, Rightarrow]
	\arrow["{\lassociator{l}{r}{l}}", from=1-3, to=1-4, Rightarrow]
	\arrow["{l \whiskerl \varepsilon}", from=1-4, to=1-5, Rightarrow]
	\arrow["{\runitor{l}}", from=1-5, to=1-6, Rightarrow]
\end{tikzcd}
\]
\[
\begin{tikzcd}
	r & {\comp{r}{\id{x}}} & {\comp{r}{(\comp{l}{r})}} & {\comp{(\comp{r}{l})}{r}} & {\comp{\id{y}}{r}} & r
	\arrow["{\rinvunitor{r}}", from=1-1, to=1-2, Rightarrow]
	\arrow["{r \whiskerl \eta}", from=1-2, to=1-3, Rightarrow]
	\arrow["{\rassociator{r}{l}{r}}", from=1-3, to=1-4, Rightarrow]
	\arrow["{\varepsilon \whiskerr r}", from=1-4, to=1-5, Rightarrow]
	\arrow["{\lunitor{r}}", from=1-5, to=1-6, Rightarrow]
\end{tikzcd}
\]
\end{defi}

Our notation for adjunctions is taken from \cite{di2019unicity}.
The two coherences in \cref{def:adjunction} are called the \emph{triangle equalities}.
As expected, adjunctions internal to $\UnivCat$ correspond to adjunctions of categories \cite{mac2013categories}.
This is because the unitors and associators in $\UnivCat$ are pointwise the identity, so the triangle equalities in \cref{def:adjunction} reduce to the usual ones.

\begin{exa}
\label{exa:adj-dual}
We characterize adjunctions in $\opbicat{B}$ and $\cobicat{B}$ as follows.
\begin{itemize}
	\item (\coqident{Bicategories.Morphisms.Examples.MorphismsInOp1Bicat}{op1_left_adjoint_to_right_adjoint}) Every adjunction $\adjunction{l}{r}{\eta}{\varepsilon}$ in $\B$ gives rise to an adjunction $\adjunction{r}{l}{\eta}{\varepsilon}$ in $\opbicat{\B}$ and vice versa.
	\item (\coqident{Bicategories.Morphisms.Examples.MorphismsInOp2Bicat}{op2_left_adjoint_to_right_adjoint})  Every adjunction $\adjunction{l}{r}{\eta}{\varepsilon}$ in $\B$ gives rise to an adjunction $\adjunction{r}{l}{\varepsilon}{\eta}$ in $\cobicat{\B}$ and vice versa. 
\end{itemize}
\end{exa}

\cref{exa:adj-dual} can be strengthened by using the terminology of \emph{left adjoints} and \emph{right adjoints}.
Given a 1-cell $f : x \onecell y$, the type $\leftadj{\B}{f}$ says that we have $r$,$\eta$, and $\varepsilon$ such that we have an adjunction $\adjunction{l}{r}{\eta}{\varepsilon}$.
The type $\rightadj{\B}{f}$ is defined analogously.
Now we can reformulate \cref{exa:adj-dual} as follows: we have equivalences $\typeequiv{\leftadj{\opbicat{\B}}{f}}{\rightadj{\B}{f}}$ and $\typeequiv{\leftadj{\cobicat{\B}}{f}}{\rightadj{\B}{f}}$ of types.

Next we look at \emph{displayed adjunctions}, which we use to obtain adjunctions in total bicategories \cite{bicatspaper}.
This notion is used to characterize adjunctions in bicategories such as $\Terminal$ and $\SymMonCat$.

\begin{defi}[\coqident{Bicategories.DisplayedBicats.DispAdjunctions}{disp_adjunction}]
\label{def:disp-adj}
Let $\B$ be a bicategory and let $\D$ be a displayed bicategory over $\B$.
A \textbf{displayed adjunction} over an adjunction $\adjunction{l}{r}{\eta}{\varepsilon}$ where $l : x \onecell y$ consists of
\begin{itemize}
	\item objects $\disp{x} : \dob{\D}{x}$ and $\disp{y} : \dob{\D}{y}$;
	\item displayed morphisms $\disp{l} : \dmor{\disp{x}}{\disp{y}}{l}$ and $\disp{r} : \dmor{\disp{y}}{\disp{x}}{r}$;
	\item displayed 2-cells $\disp{\eta} : \dtwo{\disp{\id{x}}}{\comp{\disp{l}}{\disp{r}}}{\eta}$ and $\disp{\varepsilon} : \dtwo{\comp{\disp{r}}{\disp{l}}}{\id{x}}{\varepsilon}$.
\end{itemize}
We also require some coherences and those can be found in the formalization.
We denote this data by $\dispadjunction{l}{r}{\eta}{\varepsilon}$.
\end{defi}

\begin{problem}
\label{prob:total-adj}
Given a displayed adjunction $\dispadjunction{l}{r}{\eta}{\varepsilon}$ in a displayed bicategory $\D$ over $\adjunction{l}{r}{\eta}{\varepsilon}$, to construct an adjunction $\total{\dispadjunction{l}{r}{\eta}{\varepsilon}}$ in $\total{D}$.
\end{problem}

\begin{construction}[\coqident{Bicategories.DisplayedBicats.DispAdjunctions}{left_adjoint_data_total_weq}]{prob:total-adj}
\label{constr:total-adj}
The left adjoint to $\total{\dispadjunction{l}{r}{\eta}{\varepsilon}}$ is $(l, \disp{l})$, the right adjoint is $(r, \disp{r})$, the unit is $(\eta, \disp{\eta})$, and the counit is $(\varepsilon, \disp{\varepsilon})$.
\end{construction}

\begin{exa}[\coqident{Bicategories.Morphisms.Examples.MorphismsInStructuredCat}{disp_adj_weq_preserves_terminal}]
\label{exa:adj-terminal}
Adjunctions in $\Terminal$ are given by an adjunction $\adjunction{l}{r}{\eta}{\varepsilon}$ in $\UnivCat$ such that $l$ preserves terminal objects.
Note that $r$ automatically preserves terminal objects, because $r$ is a right adjoint.
\end{exa}

Analogously, we characterize adjunctions in $\SymMonCat$.
Now we have developed enough to state and prove the core theorems of the formal theory of monads \cite{street1972formal}.
These theorems relate adjunctions and monads, and we first prove that every adjunction gives rise to a monad.

\begin{problem}
\label{prob:adj-to-mnd}
Given an adjunction $\adjunction{l}{r}{\eta}{\varepsilon}$, to construct a monad $\adjtomnd{\adjunction{l}{r}{\eta}{\varepsilon}}$.
\end{problem}

\begin{construction}[\coqident{Bicategories.Monads.Examples.AdjunctionToMonad}{mnd_from_adjunction}]{prob:adj-to-mnd}
\label{constr:adj-to-mnd}
Let an adjunction $\adjunction{l}{r}{\eta}{\varepsilon}$ be given where $l : x \rightarrow y$.
We define the monad $\adjtomnd{\adjunction{l}{r}{\eta}{\varepsilon}}$ as follows.
\begin{itemize}
	\item Its object is $x$;
	\item the endomorphism is $\comp{l}{r} : x \onecell x$;
	\item the unit is $\eta : \id{x} \onecell \comp{l}{r}$;
	\item for the multiplication, we use the following composition of 2-cells
	\[
	\begin{tikzcd}[sep=3em]
		{\comp{(\comp{l}{r})}{(\comp{l}{r})}} & {\comp{l}{(\comp{r}{(\comp{l}{r})})}} & {\comp{l}{(\comp{(\comp{r}{l})}{r})}} & {\comp{l}{(\comp{\id{y}}{r})}} & {\comp{l}{r}}
		\arrow["{\lassociator{l}{r}{\comp{l}{r}}}", from=1-1, to=1-2, Rightarrow]
		\arrow["{l \whiskerl \rassociator{r}{l}{r}}", from=1-2, to=1-3, Rightarrow]
		\arrow["{l \whiskerl (\varepsilon \whiskerr r)}", from=1-3, to=1-4, Rightarrow]
		\arrow["{l \whiskerl \lunitor{r}}", from=1-4, to=1-5, Rightarrow]
	\end{tikzcd}
	\]
\end{itemize}
The proofs of the necessary equalities can be found in the formalization.
\end{construction}

Since by \cref{exa:adj-dual,exa:mnd-dual} adjunctions and monads in $\cobicat{\B}$ correspond to adjunctions and comonads in $\B$ respectively, 
we get that every adjunction in $\B$ induces a comonad by \cref{constr:adj-to-mnd}.
Next we look at the converse: obtaining adjunctions from monads.
For this, we need to work in a bicategory with Eilenberg-Moore objects.
We show that every monad $m$ gives rise to an adjunction and that the monad coming from this adjunction is equivalent to $m$.

\begin{problem}
\label{prob:mnd-to-adj}
Given a bicategory $\B$ with Eilenberg-Moore objects and a monad $m$ in $\B$, to construct an adjunction $\mndtoadj{m}$ and an adjoint equivalence $\mndtoadjequiv{m}$ between $\adjtomnd{\mndtoadj{m}}$ and $m$.
\end{problem}

\begin{construction}[\coqident{Bicategories.Monads.MonadToAdjunction}{mnd_to_adjunction}]{prob:mnd-to-adj}
\label{constr:mnd-to-adj}
The right adjoint is the 1-cell $\emmor{e} : \monadob{m} \onecell \emlim{\B}{m}$.
For the left adjoint, we need to define a 1-cell $\freealg{m} : \emlim{\B}{m} \onecell \monadob{m}$, and we use the universal property of Eilenberg-Moore objects for that.
We construct a cone $q$ as follows.
\begin{itemize}
	\item The object is $\monadob{m}$;
	\item the morphism is $\monadendo{m}$;
	\item the 2-cell is
	$
	\begin{tikzcd}
		{\comp{\monadendo{m}}{\monadendo{m}}} & {\monadendo{m}} & {\comp{\id{\monadob{m}}}{\monadendo{m}}.}
		\arrow["{\monadmult{m}}", from=1-1, to=1-2, Rightarrow]
		\arrow["{\linvunitor{\monadendo{m}}}", from=1-2, to=1-3, Rightarrow]
	\end{tikzcd}
	$
\end{itemize}
We define $\freealg{m}$ as $\emumpmor{q}$.
The unit of the desired adjunction is defined as follows.
\[
\begin{tikzcd}[column sep = huge]
	{\id{\monadob{m}}} & {\monadendo{m}} & {\comp{\freealg{m}}{\emmor{e}}}
	\arrow["{\monadunit{m}}", from=1-1, to=1-2, Rightarrow]
	\arrow["{\emumpcom{q}^{-1}}", from=1-2, to=1-3, Rightarrow]
\end{tikzcd}
\]
For the counit we use the universal property of Eilenberg-Moore objects.
More specifically, we have two 1-cells $e \rightarrow e$, namely $\freealg{m} \cdot \emmor{e}$ and $\id{e}$,
and to obtain the desired 2-cell $\emmor{e} \cdot \freealg{m}  \twocell \id{e}$,
we need to construct a 2-cell
\[
\tau : \comp{\idmonad{\emmor{e} \cdot \freealg{m}}}{\emmor{e}} \Rightarrow \comp{\idmonad{\id{e}}}{\emmor{e}}
\]
in the bicategory of monads.
We only describe how to construct the underlying 2-cell of $\tau$.
For this, we need to construct a 2-cell $\theta : (\emmor{e} \cdot \freealg{m}) \cdot \emmor{e} \twocell \id{e} \cdot \emmor{e}$ in $\B$.
We define $\theta$ as the following composition of 2-cells.
\[
\begin{tikzcd}
  {(\emmor{e} \cdot \freealg{m}) \cdot \emmor{e}} &[0.25em] {\emmor{e} \cdot (\freealg{m} \cdot \emmor{e})} &[3.5em] {\emmor{e} \cdot m} &[0.25em] {\id{e} \cdot \emmor{e}}
  \arrow["{\lassociator{}{}{}}", from=1-1, to=1-2, Rightarrow]
  \arrow["{\emmor{e} \whiskerl \emumpcom{q}}", from=1-2, to=1-3, Rightarrow]
  \arrow["{\emcell{e}}", from=1-3, to=1-4, Rightarrow]
\end{tikzcd}
\]

To construct $\mndtoadjequiv{m}$, we use \cref{prop:mnd-adjequiv}.
Hence, it suffices to construct a monad morphism $G : \adjtomnd{\mndtoadj{m}} \onecell m$ whose underlying 1-cell and 2-cell are an adjoint equivalence and invertible respectively.
We define $G$ as follows.
\begin{itemize}
	\item The underlying 1-cell is $\id{\monadob{m}} : \monadob{m} \onecell \monadob{m}$.
	\item For the underlying 2-cell, we take
	\[
	\begin{tikzcd}
	{\comp{\id{\monadob{m}}}{\monadendo{m}}} &[1em] {\monadendo{m}} & [2em]{\comp{\freealg{m}}{\emmor{e}}} &[2em] {\comp{(\comp{\freealg{m}}{\emmor{e}})}{\id{\monadob{m}}}}
		\arrow["{\lunitor{\monadendo{m}}}", from=1-1, to=1-2, Rightarrow]
		\arrow["{\emumpcom{q}}", from=1-2, to=1-3, Rightarrow]
		\arrow["{\rinvunitor{\comp{\freealg{m}}{\emmor{e}}}}", from=1-3, to=1-4, Rightarrow]
	\end{tikzcd}
	\qedhere
	\]
\end{itemize}
\end{construction}
Note that we can instantiate \cref{constr:mnd-to-adj} to several concrete instances.
\begin{itemize}
	\item Since $\UnivCat$ has Eilenberg-Moore objects by \cref{ex:em-cats}, we get the usual construction of adjunctions from monads via Eilenberg-Moore categories.
	\item Since $\opbicat{\UnivCat}$ has Eilenberg-Moore objects by \cref{constr:univ-kleisli-cats}, every monad gives rise to an adjunction via Kleisli categories.
\end{itemize}
One can also show that $\cobicat{\SymMonCat}$ has Eilenberg-Moore objects, and thus every comonad of symmetric monoidal categories gives rise to an adjunction.

\section{Monadic Adjunctions}
\label{sec:monadic}
By \cref{constr:mnd-to-adj} we have an equivalence $\adjtomnd{\mndtoadj{m}} \simeq m$ for every monad $m$.
However, it not the case that every adjunction $\adjunction{l}{r}{\eta}{\varepsilon}$ is equivalent to $\mndtoadj{\adjtomnd{\adjunction{l}{r}{\eta}{\varepsilon}}}$.
\emph{Monadic adjunctions} are the adjunctions for which we do have such an equivalence.

\begin{problem}
\label{prob:comparison}
Given an adjunction $\adjunction{l}{r}{\eta}{\varepsilon}$ where $l : x \onecell y$ in a bicategory $\B$ with Eilenberg-Moore objects,
to construct a 1-cell $\comparison{\adjunction{l}{r}{\eta}{\varepsilon}} : y \onecell \emlim{\B}{\adjtomnd{\adjunction{l}{r}{\eta}{\varepsilon}}}$.
\end{problem}

\begin{construction}[\coqident{Bicategories.Morphisms.Monadic}{comparison_mor}]{prob:comparison}
\label{constr:comparison}
We use the universal property of Eilenberg-Moore objects, and we construct a cone $q$ as follows.
\begin{itemize}
	\item The object is $y$;
	\item the 1-cell is $r : y \onecell x$;
	\item the 2-cell is 
	$
	\begin{tikzcd}
		{\comp{r}{(\comp{l}{r})}} & {\comp{(\comp{r}{l})}{r}} & {\comp{\id{y}}{r}}
		\arrow["{\rassociator{r}{l}{r}}", from=1-1, to=1-2, Rightarrow]
		\arrow["{\varepsilon \whiskerr r}", from=1-2, to=1-3, Rightarrow]
	\end{tikzcd}.
	$
\end{itemize}
Note that the object and 1-cell of $\adjtomnd{\adjunction{l}{r}{\eta}{\varepsilon}}$ are $x$ and $\comp{l}{r}$ respectively. 
Now we define $\comparison{\adjunction{l}{r}{\eta}{\varepsilon}}$ to be $\emumpmor{q}$. 
\end{construction}

\begin{defi}[\coqident{Bicategories.Morphisms.Monadic}{is_monadic}]
An adjunction $\adjunction{l}{r}{\eta}{\varepsilon}$ in a bicategory $\B$ with Eilenberg-Moore objects is called \textbf{monadic} if the 1-cell $\comparison{\adjunction{l}{r}{\eta}{\varepsilon}}$ is an adjoint equivalence.
\end{defi}

The following proposition follows directly, because in locally univalent bicategories the type that a 1-cell is an adjoint equivalence, is a proposition \cite[Proposition 3.11]{bicatspaper}.

\begin{prop}[\coqident{Bicategories.Morphisms.Monadic}{isaprop_is_monadic}]
\label{prop:isaprop_monadic}
Let $\B$ be a locally univalent bicategory.
Then for each adjunction $\adjunction{l}{r}{\eta}{\varepsilon}$ the type expressing that $\adjunction{l}{r}{\eta}{\varepsilon}$ is monadic, is a proposition.
\end{prop}

Next we look at a representable version of this definition.
More specifically, we define monadic 1-cells using monadic functors in $\UnivCat$.
To do so, we first show that every adjunction gives rise to an adjunction on the hom-categories.

\begin{problem}
\label{prob:hom-adj}
Given $\adjunction{l}{r}{\eta}{\varepsilon}$ where $l : x \onecell y$ and an object $w$, to construct an adjunction $\homadj{w}{\adjunction{l}{r}{\eta}{\varepsilon}}$ between $\homC{\B}{w}{x}$ and $\homC{\B}{w}{y}$.
\end{problem}

\begin{construction}[\coqident{Bicategories.Morphisms.Properties.AdjunctionsRepresentable}{left_adjoint_to_adjunction_cat}]{prob:hom-adj}
\label{constr:hom-adj}
The left adjoint is $\postcomp{l}{w}$, while the right adjoint is $\postcomp{r}{w}$.
For the unit, we need to construct natural 2-cells $f \twocell \comp{(\comp{f}{l})}{r}$, and for which we take
\[
\begin{tikzcd}
	f & {\comp{f}{\id{y}}} & {\comp{f}{(\comp{l}{r})}} & {\comp{(\comp{f}{l})}{r}}
	\arrow["{\rinvunitor{f}}", from=1-1, to=1-2, Rightarrow]
	\arrow["{f \whiskerl \eta}", from=1-2, to=1-3, Rightarrow]
	\arrow["{\rassociator{f}{l}{r}}", from=1-3, to=1-4, Rightarrow]
\end{tikzcd}
\]
For the counit, we construct natural 2-cells $\comp{(\comp{f}{r})}{l}$, which are defined as follows.
\[
\begin{tikzcd}
	{\comp{(\comp{f}{r})}{l}} & {\comp{f}{(\comp{r}{l})}} & {\comp{f}{\id{y}}} & f
	\arrow["{\lassociator{f}{r}{l}}", from=1-1, to=1-2, Rightarrow]
	\arrow["{f \whiskerl \varepsilon}", from=1-2, to=1-3, Rightarrow]
	\arrow["{\runitor{f}}", from=1-3, to=1-4, Rightarrow]
\end{tikzcd} \qedhere
\]
\end{construction}

\begin{defi}[\coqident{Bicategories.Morphisms.Monadic}{is_monadic_repr}]
\label{def:repr-monadic}
An adjunction $\adjunction{l}{r}{\eta}{\varepsilon}$ in a locally univalent bicategory is called \textbf{representably monadic} if
for every $w \in \B$ the adjunction $\homadj{w}{\adjunction{l}{r}{\eta}{\varepsilon}}$ is a monadic 1-cell in $\UnivCat$.
\end{defi}

Note that we require the bicategory in \cref{def:repr-monadic} to be locally univalent, so that each hom-category lies in $\UnivCat$.
By \cref{prop:isaprop_monadic} the type that a 1-cell is representably monadic, is a proposition.
Now we show that these two notions of monadicity are equivalent.
We first prove the following lemma.

\begin{lem}[\coqident{Bicategories.Morphisms.Properties.AdjunctionsRepresentable}{left_adjoint_equivalence_weq_left_adjoint_equivalence_repr}]
\label{prop:repr-adjequiv}
A 1-cell $f : x \onecell y$ in a bicategory $\B$ is an adjoint equivalence if and only if for all $w : \B$ the functor $\postcomp{f}{w}$ is an adjoint equivalence of categories.
\end{lem}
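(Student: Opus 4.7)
The plan is to prove the two implications separately. The forward direction is a direct application of \cref{constr:hom-adj}: if $f$ admits an adjoint equivalence structure $\adjunction{f}{r}{\eta}{\varepsilon}$ with $\eta$ and $\varepsilon$ invertible, then the induced hom-adjunction $\homadj{w}{\adjunction{f}{r}{\eta}{\varepsilon}}$ has as unit and counit the vertical composites of unitors, associators, and the whiskerings $f \whiskerl \eta$ and $f \whiskerl \varepsilon$ spelled out in \cref{constr:hom-adj}. Each component of those composites is invertible under the hypothesis, so the composites are invertible, and $\postcomp{f}{w}$ is thus an adjoint equivalence of categories.

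For the converse I will run a representable (Yoneda-style) argument. First I specialize the hypothesis to $w = y$, so that $\postcomp{f}{y}$ is in particular essentially surjective; applied to $\id{y}$, this gives a 1-cell $r : y \onecell x$ together with an invertible 2-cell $\varepsilon : \comp{r}{f} \twocell \id{y}$. Next I specialize to $w = x$, so that $\postcomp{f}{x}$ is fully faithful. The 1-cells $\id{x}$ and $\comp{f}{r}$ in $\homC{\B}{x}{x}$ are sent by $\postcomp{f}{x}$ to $\comp{\id{x}}{f}$ and $\comp{(\comp{f}{r})}{f}$ respectively, and these are canonically isomorphic in $\homC{\B}{x}{y}$ via a composite built from unitors, an associator, and $f \whiskerl \varepsilon^{-1}$. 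Full faithfulness then produces a unique 2-cell $\eta : \id{x} \twocell \comp{f}{r}$ lifting this isomorphism, and since fully faithful functors reflect isomorphisms, $\eta$ is invertible.

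At this point I have the data $(r, \eta, \varepsilon)$ of an equivalence of $f$ in $\B$, with both $\eta$ and $\varepsilon$ invertible, and it remains to upgrade this data to an adjoint equivalence. This is the standard bicategorical improvement procedure: keep $r$ and $\varepsilon$, replace $\eta$ by the canonical composite involving $\varepsilon$, associators, and unitors dictated by the swallowtail formula, and verify the triangle identities; the modified unit remains invertible. The main obstacle is precisely this final step, since the triangle equalities do not follow directly from the representable construction, but the improvement from equivalence to adjoint equivalence is already formalized in \UniMath for bicategories, so it suffices to invoke it to conclude.
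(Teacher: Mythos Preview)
Your proof is correct; both directions go through as you describe. The forward direction matches the paper exactly.

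For the converse, your argument and the paper's follow the same overall outline---extract a right adjoint $r$, invertible $\eta$ and $\varepsilon$, then upgrade the equivalence data to an adjoint equivalence---but they differ in how the data is obtained. You use only the \emph{properties} of the hom-equivalences: essential surjectivity of $\postcomp{f}{y}$ at $\id{y}$ to get $r$ and $\varepsilon$, and full faithfulness of $\postcomp{f}{x}$ to lift an iso and obtain $\eta$. The paper instead uses the adjoint \emph{structure} of the hom-equivalences directly: it takes $r \defeq R_y(\id{y})$ where $R_w$ is the chosen right adjoint of $\postcomp{f}{w}$, reads off one 2-cell as a component of the hom-(co)unit, and writes the other as an explicit composite involving $R_x$, associators, and unitors. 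Your route is a bit more conceptual and avoids writing down that long composite; the paper's route is closer to what one actually formalizes, since in \UniMath the hypothesis hands you the functors $R_w$ and their units/counits as data rather than mere essential surjectivity and full faithfulness. Either way, both proofs conclude by invoking the standard improvement of an equivalence to an adjoint equivalence, so the final step is shared.
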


\begin{proof}
Suppose, we have a adjoint equivalence $f : \adjequiv{x}{y}$ and let $w : \B$.
By \cref{constr:hom-adj}, we obtain an adjunction $\homadj{w}{\adjunction{l}{r}{\eta}{\varepsilon}}$ between $\homC{\B}{w}{x}$ and $\homC{\B}{w}{y}$ whose right adjoint is $\postcomp{f}{w}$.
Since the unit and counit of $f$ are invertible, the unit and counit of $\homadj{w}{\adjunction{l}{r}{\eta}{\varepsilon}}$ are invertible as well, and thus we get the desired adjoint equivalence.

Next suppose that for every $w$ the functor $\postcomp{f}{w}$ is an adjoint equivalence.
For every $w$, we denote its right adjoint by $R_w : \homC{\B}{w}{y} \onecell \homC{\B}{w}{x}$, its unit by $\eta_w : \comp{R_w}{\postcomp{f}{w}} \twocell \idimpl{\homC{\B}{w}{y}}$, and its counit by $\varepsilon_w : \idimpl{\homC{\B}{w}{x}} \twocell \comp{\postcomp{f}{w}}{R_w}$.
Now we show that $f$ is an adjoint equivalence
\begin{itemize}
	\item The right adjoint is $R_y(\id{y}) : \homC{\B}{y}{x}$.
	\item The unit is $\eta_w(\id{y}) : \comp{R_y(\id{y})}{f} \twocell \id{y}$.
	\item For the counit, we use the following composition
	\[
	\begin{tikzcd}[row sep=large]
		{\comp{f}{R_y(\id{y})}} &[3em] {\comp{R_x((\comp{f}{R_y(\id{y})})}{f})} &[2em] {R_x(\comp{f}{(\comp{R_y(\id{y})}{f})})} \\
		{R_x(\comp{f}{\id{y}})} & {R_x(f)} & {R_x(\comp{\id{x}}{f})} & {\id{x}}
		\arrow["{\eta_x(\comp{f}{R_y(\id{y})})}", from=1-1, to=1-2, Rightarrow]
		\arrow["{R_x(\lassociator{f}{R_y(\id{y})}{f})}", from=1-2, to=1-3, Rightarrow]
		\arrow["{R_x(f \whiskerl \eta_y(\id{y}))}"{description}, from=1-3, to=2-1, Rightarrow]
		\arrow["{R_x(\runitor{f})}"', from=2-1, to=2-2, Rightarrow]
		\arrow["{R_x(\linvunitor{f})}"', from=2-2, to=2-3, Rightarrow]
		\arrow["{\eta_x^{-1}(\id{x})}"', from=2-3, to=2-4, Rightarrow]
	\end{tikzcd}
	\]
\end{itemize}
Since both the unit and counit are invertible, $f$ is indeed an adjoint equivalence.
\end{proof}

\begin{thm}[\coqident{Bicategories.Morphisms.Monadic}{is_monadic_repr_weq_is_monadic}]
\label{prop:repr-monadic}
An adjunction is monadic if and only if it is representably monadic.
\end{thm}

\begin{proof}
Suppose that we have an adjunction $\adjunction{l}{r}{\eta}{\varepsilon}$ and that we have $w : \B$.
First, we note that we have a monad on $\homC{\B}{w}{y}$, namely $m \defeq \hommonad{w}{\homadj{w}{\adjunction{l}{r}{\eta}{\varepsilon}}}$.
We denote the comparison cell $\comparison{\homadj{w}{\adjunction{l}{r}{\eta}{\varepsilon}}}$ by $F$.

We also have a functor $\postcomp{\comparison{\adjunction{l}{r}{\eta}{\varepsilon}}}{w} : \homC{\B}{w}{y} \onecell \homC{\B}{w}{\emlim{\B}{m'}}$, which we denote by $G$.
We write $m'$ for the monad $\hommonad{w}{m}$, and recall that by \cref{prop:em-ump} we have an adjoint equivalence from $\homC{\B}{w}{\emlim{\B}{m'}}$ to $\emlim{}{\hommonad{w}{m}}$.
Denote this equivalence by $H$.
There also is a functor $K : \emlim{}{m'} \onecell \emlim{}{m}$ and a natural isomorphism $\tau : \invcell{\comp{F}{(\comp{H}{K})}}{G}$: their precise definition can be found in the formalization.

As such, we have the following diagram for every $w : \B$:
\[
\begin{tikzcd}[column sep = huge]
	{\homC{\B}{w}{y}} & {\emlim{}{m}} \\
	{\homC{\B}{w}{\emlim{\B}{m'}}} & {\emlim{}{m'}}
	\arrow[""{name=0, anchor=center, inner sep=0}, "G", from=1-1, to=1-2]
	\arrow["F"', from=1-1, to=2-1]
	\arrow[""{name=1, anchor=center, inner sep=0}, "H"', from=2-1, to=2-2]
	\arrow["K"', from=2-2, to=1-2]
	\arrow["\tau"', shorten <=5pt, shorten >=5pt, Rightarrow, from=1, to=0]
\end{tikzcd}
\]
Since both $H$ and $K$ are adjoint equivalences, we deduce that $F$ is an adjoint equivalence if and only if $G$ is.
As such, if $\adjunction{l}{r}{\eta}{\varepsilon}$ is representably monadic, then $F$ is an adjoint equivalence and thus $G$ is an adjoint equivalence.
From \cref{prop:repr-adjequiv} we get that $\adjunction{l}{r}{\eta}{\varepsilon}$ is monadic.
For the converse, we use the same argument: if $\adjunction{l}{r}{\eta}{\varepsilon}$ is monadic, then $G$ is an adjoint equivalence.
Hence, $F$ is an adjoint equivalence as well, so $\adjunction{l}{r}{\eta}{\varepsilon}$ is representably monadic.
\end{proof}

\section{Conclusion}
\label{sec:concl}
We translated Street's formal theory of monads to a univalent setting in this paper.
We saw that it provides a good setting to study monads in univalent foundations, because it allows us to prove the core theorems in arbitrary bicategories instead of only for categories.
For that reason, it helps us with concrete problems, such as constructing an adjunction from a monad using the univalent version of the Kleisli category.
This is because one only needs to prove a universal property instead of reproducing the whole construction of the adjunction.
By now the work in this paper has been applied to enriched categories to construct Kleisli categories of enriched categories \cite{enriched}.

There are numerous ways to continue this line of research.
One result that is missing, is Theorem 12 from Street's paper \cite{street1972formal}.
In addition, the work in this setting provides a framework in which one can study numerous applications, such as models of linear logic \cite{ahrens:2024,Benton94,mellies2009categorical}, Moggi-style semantics \cite{moggi1991notions}, call-by-push-value \cite{levy2012call}, and the enriched effect calculus \cite{EggerMS14}.
Formalizing these applications would be a worthwhile extension, and they would require one to formalize more examples,
such as symmetric monoidal categories.
Finally, one could study extensions of the formal theory to a wider class of monads, such as graded monads \cite{FujiiKM16} or relative monads \cite{arkor2023formal}.

\section*{Acknowledgments}
The author thanks Benedikt Ahrens, Deivid Vale, Herman Geuvers, Kobe Wullaert, and the anonymous reviewers of FSCD 2023 and LMCS for their comments on earlier versions of this paper.
The author also thanks the Coq developers for providing the Coq proof assistant and their continuous support to keep \UniMath compatible with Coq.
This research was supported by the NWO project “The Power of Equality” OCENW.M20.380, which is financed by the Dutch Research Council (NWO).

\bibliographystyle{alphaurl}
\bibliography{literature}

\end{document}